\documentclass[11pt]{article}
\usepackage[utf8]{inputenc}
\usepackage[T1]{fontenc}
\usepackage{graphicx}
\usepackage{longtable}
\usepackage{wrapfig}
\usepackage{rotating}
\usepackage[normalem]{ulem}
\usepackage{amsmath}
\usepackage{amssymb}
\usepackage{capt-of}
\usepackage{jheppub}
\usepackage{dsfont}
\usepackage[obeyFinal]{easy-todo}
\setcounter{MaxMatrixCols}{20}
\usepackage{tikz}
\usepackage{tikz-feynman}
\usepackage{mathtools}
\usepackage{amsthm}
\usepackage{subcaption}
\usepackage{cleveref}
\usetikzlibrary{shapes.geometric, arrows}
\hypersetup{colorlinks=true,linkcolor=blue}
\bibliographystyle{JHEP}

\newtheorem{lemma}{Lemma}[section]
\newcommand{\eqnDiag}[1]{ \vcenter{\hbox{#1}} }

\tikzstyle{startstop} = [rectangle, rounded corners, 
minimum width=3cm, 
minimum height=1cm,
text centered, 
draw=black, 
fill=red!30]

\tikzstyle{process} = [rectangle, 
minimum width=2cm,
minimum height=1cm, 
text centered, 
draw=black, 
fill=orange!30]

\tikzstyle{process2} = [rectangle, 
minimum width=3cm, 
minimum height=1cm, 
text centered, 
text width=3cm, 
draw=black, 
fill=orange!30]

\tikzstyle{decision} = [rectangle, 
minimum width=3cm, 
minimum height=1cm, 
text centered, 
draw=black, 
fill=green!30]

\tikzstyle{caption} = [rectangle, 
minimum width=4cm, 
minimum height=1cm, 
text centered,
text width = 4cm]
\tikzstyle{arrow} = [thick,->,>=stealth]

\abstract{
We investigate a novel theoretical structure underlying the computation of
integration-by-parts relations between Feynman integrals via syzygy-based
methods.
Building on insights from intersection theory, we analyze the large-$\epsilon$
limit of dimensional regularization on the maximal cut, showing that total
derivatives vanish on the critical locus of the logarithm of the Baikov
polynomial--the locus known to govern the number of master integrals.
We introduce ``critical syzygies'' as a distinguished subset of syzygies that
captures this behavior. We show that, when the critical locus is isolated,
critical syzygies generate a sufficient set of total derivatives in the
large-$\epsilon$ limit.
We study their structure analytically at one loop and develop a numerical
approach for their construction at two loops.
Our results demonstrate that critical syzygies are a valuable tool for
integral reduction in cutting-edge two-loop examples, offering a novel geometric
perspective on integration-by-parts relations.
}
\author{Ben Page, Qian Song}
\affiliation{Department of Physics and Astronomy, Ghent University, 9000 Ghent, Belgium}

\emailAdd{ben.page@ugent.be, qian.song@ugent.be}
\date{\today}
\title{Critical Points and Syzygies for Feynman Integrals}
\begin{document}

\maketitle

\section{Introduction}

Feynman integrals are an essential component of precise predictions of scattering
observables at particle colliders. Moreover, they are increasingly important for making
predictions for gravitational wave experiments.
In cutting-edge calculations in perturbative quantum field theory, one of the
major bottlenecks is the large number of Feynman integrals that arise.
For this reason, in modern computational frameworks, one makes extensive use of
the fact that Feynman integrals exhibit linear relations with rational
coefficients.
These so-called ``integration by parts'' (IBP)
relations~\cite{Tkachov:1981wb,Chetyrkin:1981qh} are then used to reduce the
number of integrals to a much smaller number of ``master integrals''.
Moreover, these integral relations are essential for the calculation of Feynman
integrals, forming the backbone of the ``differential equations''
approach~\cite{Kotikov:1990kg,Remiddi:1997ny,Gehrmann:1999as,Henn:2013pwa}.

Given the large importance of relations between Feynman integrals in perturbative
calculations, it is perhaps not surprising that a large amount of effort has
gone into understanding them. The classical approach to integral reduction,
under the name of the ``Laporta'' algorithm~\cite{Laporta:2000dsw}, is to
construct a large set of total derivatives, which integrate to zero. One then
interprets these as relations between Feynman integrals and solves the relations
using linear algebra. This method has been very successful, and many public
implementations make use of it~\cite{
Maierhofer:2017gsa,Klappert:2020nbg,
Smirnov:2008iw,Smirnov:2019qkx,Smirnov:2023yhb,
Peraro:2019svx,
Guan:2024byi,
Lee:2012cn
}. In recent years, many such approaches have been able to reach new heights by
performing the reduction on numerical phase-space points using modular
arithmetic~\cite{vonManteuffel:2014ixa,Peraro:2016wsq,DeLaurentis:2022otd} and
reconstructing analytic results from these evaluations.
In tandem, a number of approaches have been introduced to organize the set of
total derivatives that one handles. One such approach is the ``block-triangular
form'', where a system of total derivatives is constructed which allows for
rapid numerical evaluations~\cite{Guan:2019bcx, Guan:2024byi}.
Another approach recently under study, is to organize the relations to form
symbolic reduction rules~\cite{Kosower:2018obg,Smith:2025xes}.
Moreover, there have been a number of recent investigations into the application of
machine-learning techniques to improving the Laporta algorithm~\cite{Song:2025pwy,vonHippel:2025okr,Zeng:2025xbh}.
Beyond this, there are the so-called ``syzygy-based''
approaches~\cite{Gluza:2010ws}, which avoid introducing auxiliary terms into the
collection of total derivatives that one constructs.
A further related approach, recently under study is that of constructing
parametric annihilators~\cite{Bertolini:2025zud}.
There have also been important developments in understanding mathematical
structures that control the IBP relations, in the language of ``intersection
theory''~\cite{Mastrolia:2018uzb, Frellesvig:2020qot}. Here, one makes use of
the ``intersection product'' that one can define on Feynman integrals in Baikov
representation to directly reduce integrals to master integrals.

In this work, we study the ``syzygy'' approach, motivated by its importance in the
two-loop numerical unitarity
method~\cite{Abreu:2017xsl,Abreu:2017hqn,Abreu:2018zmy,Abreu:2020xvt}.
Since its introduction in ref.~\cite{Gluza:2010ws}, the syzygy approach to
relations between Feynman integrals has received a great deal of study. Beyond
its original formulation in momentum space, it has been formulated in so-called
``adapted coordinates''~\cite{Ita:2015tya, Abreu:2017xsl}, embedding
space~\cite{Simmons-Duffin:2012juh,Caron-Huot:2014lda,Bern:2017gdk}, and,
prominently, the Baikov representation~\cite{Baikov:1996rk, Larsen:2015ped}.
Many methods of computing solutions of the syzygy problem have been proposed.
While some solutions are known in closed form~\cite{Ita:2015tya, Bohm:2017qme},
most approaches to the syzygy problem are computational in nature. Early
approaches made use of Groebner basis techniques based upon Schreyer's
theorem~\cite{cox2005using,Zhang:2016kfo} as implemented in computer algebra
packages such as Singular~\cite{DGPS}. More recent approaches make use of the
``module intersection'' strategy~\cite{Bohm:2018bdy} in Baikov representation.
Another prominent approach is to reduce the syzygy problem to that of solving linear systems~\cite{Schabinger:2011dz, CabarcasDing2011,
Agarwal:2020dye, Abreu:2023bdp}. By now, the problem of syzygy construction is
well enough understood that there exist public codes such as
\texttt{NeatIBP}~\cite{Wu:2023upw, Wu:2025aeg}.

In this work, we introduce a novel theoretical contribution to the understanding
of syzygies and Feynman integrals. Specifically, we uncover a deep connection
between these methods and recent applications of intersection
theory~\cite{Mizera:2017rqa,Mastrolia:2018uzb} to Feynman integrals.
In recent work~\cite{Mizera:2019vvs}, it was observed that the intersection
theory of dimensionally-regulated Feynman integrals greatly simplifies in the limit of large
dimension--the large-$\epsilon$ limit.
In the first part of our work, inspired by this observation, we study the
large-$\epsilon$ limit of total derivatives in the syzygy formalism. In this
limit, we observe that, when taken on the maximal cut, total derivatives vanish
on the critical locus of the logarithm of the Baikov polynomial: exactly the locus
studied in ref.~\cite{Mizera:2019vvs}.
We then formulate this observation in the language of algebraic geometry,
discussing how this geometric statement arises from theoretical considerations
of syzygies. Specifically, we highlight a connection between syzygies
and the ``ideal quotient'': a geometrical operation which corresponds to the
removal of branches of an algebraic variety.
This theoretical correspondence allow us to define a distinguished subset of
syzygies that we dub ``critical syzygies''--those singled out in the
large-$\epsilon$ limit. Appealing to the Lee-Pomeransky approach for counting
master integrals~\cite{Lee:2013hzt}, we then argue that, in cases where the
critical locus of the maximal cut of the Baikov is isolated, they give rise to a
complete collection of total derivatives in the large-$\epsilon$ limit.

In the rest of our work, we study this critical syzygy construction. We first
discuss how critical syzygies arise in the context of one-loop Feynman integrals
and how they are controlled by the geometry of the critical locus of the Baikov
polynomial. In particular, we directly show
how critical syzygies give rise to a complete set of total derivatives relevant
for gauge theories in cases where the critical locus of the logarithm of the
maximal-cut Baikov polynomial is isolated.
We then turn to the treatment of critical syzygies at two loops, where their
construction is mathematically much more complicated and we satisfy ourselves 
with computational studies.
An important point is to understand how critical syzygies can be used to generate
the complete set of total derivatives beyond the large-$\epsilon$ limit.
To study this question, we develop a computational approach to the construction of critical
syzygies. We then apply this approach to the cutting-edge two-loop $pp
\rightarrow t\overline{t}H$ process. In this way, we are able to demonstrate
that, in examples where the critical locus of the maximal-cut Baikov is
isolated, critical syzygies are indeed sufficient to generate all necessary
total derivatives.

The paper is organized as follows. In \cref{sec:preliminaries} we introduce our
setup of syzygies and surface terms in the Baikov representation. In
\cref{sec:theory} we discuss the syzygy method in the large-$\epsilon$ limit and
how this gives rise to the phenomenon of critical syzygies. In
\cref{sec:oneloop} we discuss the analytic construction of critical syzygies for
one-loop Feynman integrals and discuss where they do and do not generate a
complete set of total derivatives. In \cref{sec:twoloop}, we discuss critical
syzygies at two loops and explore their completeness computationally in a series
of examples focused on the two-loop five-point $pp \rightarrow t\overline{t}H$
process. Finally, in \cref{sec:summary}, we summarize and discuss future directions.

\section{Feynman Integral Relations in the Baikov Representation}
\label{sec:preliminaries}

In this section, we introduce the key objects under study: Feynman integrals and
their relations. We will organize these relations in the so-called ``syzygy''
formalism, particularly focusing on the construction of ``surface terms'', due
to their importance in the numerical unitarity method.
Let us consider a dimensionally-regulated, $l$-loop Feynman integral, that depends
on $E$ independent external momenta \(p_1, \ldots, p_E\).
Each such Feynman integral can be associated to a graph, $\Gamma$.
We will work in the Baikov representation~\cite{Baikov:1996rk}.
We refer the reader to ref.~\cite{Correia:2025yao} for a recent, detailed
derivation of the representation.
In the Baikov representation, a Feynman
integral $I_\Gamma(\mathcal{N}, \vec{\nu})$ associated to a graph $\Gamma$, with
numerator $\mathcal{N}$ and propagator powers $\vec{\nu}$ is
given as
\begin{equation}
    I_\Gamma(\mathcal{N}, \vec{\nu}) = \frac{c(D)}{G(p_1, \ldots, p_E)^{\gamma+l/2}} \int_{\mathcal{C}} \mathrm{d}^N \! \vec{z} \left[ B(\vec{z})^{\gamma}  \frac{\mathcal{N}}{\prod_{e \in \text{props}(\Gamma)} z_e^{\nu_e}} \right],
\label{eq:BaikovRepresentation}
\end{equation}
where $\gamma = (D - E - l - 1)/2$, $N = \binom{l}{2} + l E$ is the number of
Baikov variables (correspondingly the number of independent
scalar products in the diagram) and $c$ is an overall prefactor which depends
only on the dimensional regulator.
The function $G(p_1, \ldots, p_E)$ is the Gram determinant of the independent
external momenta in the Feynman integral, given by
\begin{equation}
    G(a_1, \ldots, a_m) = \det(a_i \cdot a_j).
\end{equation}
The denominator product in \cref{eq:BaikovRepresentation} is taken over the set of propagators associated to the
graph $\Gamma$, which we collect into the set of indices,
$\text{props}(\Gamma)$. We refer to the remaining set of Baikov variables as
``irreducible scalar products'', defining the associated set of indices
$\text{ISPs}(\Gamma)$ through
\begin{equation}
    \{ z_1, \ldots, z_N \} = \{z_e \,\, : \,\,  e \in \text{props}(\Gamma) \} \cup \{z_i \,\, : \,\, i \in \text{ISPs}(\Gamma) \}.
    \label{eq:BaikovVariablePartition}
\end{equation}
For notational convenience, as in \cref{eq:BaikovVariablePartition}, we will
often denote propagator or ``edge'' variables as $z_e$ and ISP variables as $z_i$.
The function $B(\vec{z})$ is known as the ``Baikov polynomial'' and can be
determined by expressing the
Gram determinant of the loop momenta and external momenta in the Feynman
integral in terms of Baikov variables.
The integration contour, $\mathcal{C}$ in \cref{eq:BaikovRepresentation}, has a
boundary given by the vanishing locus of the Baikov polynomial. That is,
\begin{equation}
  \partial \mathcal{C} = \{ \vec{z} \in \mathbb{R}^N \,\, : \,\, B(\vec{z}) = 0\}.
\end{equation}
More explicit details of the contour will not be needed for our discussion.

An important fact about dimensionally-regulated Feynman integrals is that, in
an appropriate representation, if the integrand is a total derivative, then the
integral is zero. In the Baikov representation, this arises as
\begin{equation}
0 = \int_{\mathcal{C}} \mathrm{d}^N\! \vec{z} \left[ \partial_k (B^{\gamma} f_k) \right],
\label{eq:TotalDerivative}
\end{equation}
where the $k$ index is summed over all Baikov variables and the $f_k$ are
rational functions of Baikov variables. A relation such as
\cref{eq:TotalDerivative} is known as an ``integration-by-parts'' or ``IBP''
relation.
IBP relations follow as the total derivative
integral can be re-written as an integral over the integration boundary,
$\partial \mathcal{C}$. However, as the the exponent $\gamma$ of the Bakov polynomial is taken
generic, the integrand vanishes on the boundary and the result is zero.
As the integrand of all Feynman
integrals comes with a factor of $B^\gamma$, it is useful to rewrite
\cref{eq:TotalDerivative} as
\begin{equation}
  0 = \int_{\mathcal{C}} \mathrm{d}^N \! \vec{z} \left[ B^\gamma \nabla_k f_k \right] ,
  \label{eq:TwistedTotalDerivative}
\end{equation}
where we introduce the twisted covariant derivative $\vec{\nabla}$, which acts as
\begin{equation}
   \nabla_k f_k = \partial_k f_k + \gamma f_k (\partial_k \log [B]).
   \label{eq:TwistedCovariantDerivative}
\end{equation}
If we consider appropriate choices of $\vec{f}$ in \cref{eq:TotalDerivative}
then we are led to the IBP relations for Feynman
integrals. The space of Feynman integrands modulo these IBP relations is known
as the space of master integrals.

A typical approach to the construction of relations between Feynman integrals is
to judiciously construct $\vec{f}$ and to act on them with $\vec{\nabla}$. The resulting
integrands then integrate to zero by \cref{eq:TwistedTotalDerivative}.
These relations can then be organized by linear algebra methods, broadly known
as the Laporta algorithm~\cite{Laporta:2000dsw}.
In ref.~\cite{Ita:2015tya}, it was observed that, when computing scattering
amplitudes, it can be important to be able to construct total derivatives with a
prescribed denominator structure.
The content of the integral relations then is
encoded in the numerators of the relations.
This leads to the definition of the vector space of so-called ``surface terms'' as
\begin{equation}
  \text{Surface}(\Gamma, \vec{\nu}) = \left\{ \, \mathcal{N} \,\,\, : \,\,\, \frac{\mathcal{N}}{\prod_{e \in \text{props}(\Gamma)} z_e^{\nu_e}} = \nabla_k\left[ \frac{a_k}{B^\Delta \prod_{e \in \text{props}(\Gamma)} z_e^{\beta_e}} \right] \, \right\},
  \label{eq:SurfaceSpaceDefinition}
\end{equation}
where $\Delta$ and the $\beta_e$ are non-negative integers and we require that
$\mathcal{N}$ and the $a_k$ belong to
\begin{equation}
  R = \mathbb{C}(p_{i} \cdot p_j, m_k^2, \epsilon)[z_1, \ldots, z_N].
  \label{eq:RingDefinition}
\end{equation}
That is, the $a_k$ are polynomials in Baikov variables, but rational in scalar
products of the external momenta, particle masses and the dimensional regulator $\epsilon$.
The set of surface terms is therefore an infinite dimensional $\mathbb{C}(p_i \cdot
p_j, m_k^2, \epsilon)$-subspace of the polynomial ring $R$. Understanding how to
explicitly construct a basis of the subspace of $\text{Surface}(\Gamma,
\vec{\nu})$ relevant for integral reduction is the main topic of this work.

In practice, controlling the integral relations via the Laporta algorithm, or
constructing surface terms can prove demanding. To this end, an observation made
in ref.~\cite{Gluza:2010ws} about Feynman integrals is that IBP relations can be
controlled by studying ``syzygies'', which allow one to directly generate
linear relations between Feynman integrals that do not have raised propagator
powers.
We will work with syzygies in the Baikov
representation, originally studied in ref.~\cite{Larsen:2015ped} and will consider syzygies of the form
\begin{equation}
   0 = a_0 B + \sum_{i \in \text{ISPs}(\Gamma)} a_i \partial_i B + \sum_{e \in \text{props}(\Gamma)} \tilde{a}_e z_e B + \sum_{e \in \text{props}(\Gamma)} \overline{a}_e z_e \partial_e B,
\label{eq:MasterSyzygy}
\end{equation}
where the \(a_0, a_i, \tilde{a}_e, \overline{a}_e\) are members of $R$.
That is, we are looking for tuples \(a_0, a_i, \tilde{a}_e, \overline{a}_e\) of
polynomials in Baikov variables, whose coefficients are rational functions in
the external kinematics and $\epsilon$, such that \cref{eq:MasterSyzygy} is
satisfied. We will denote the set of solutions to
\cref{eq:MasterSyzygy} as $\text{Syz}(\Gamma)$.
Importantly, the elements of $\text{Syz}(\Gamma)$ form an
$R-$module. That is, taking $R$-linear combinations of solutions of
\cref{eq:MasterSyzygy} yields other solutions.
We note that the syzygy relation \cref{eq:MasterSyzygy} is subtly different to
the one used in ref.~\cite{Larsen:2015ped}, due to the \(\tilde{a}_e\) term. In
practice, it has the effect that the \(a_0\) term can be studied on the maximal cut,
as all terms proportional to propagators can be moved into the \(\tilde{a}_e\).
This slight adjustment to the formalism of
ref.~\cite{Larsen:2015ped} will turn out to be fruitful later.

Let us consider how \cref{eq:MasterSyzygy} aids in the construction of integral
relations. We note that \cref{eq:MasterSyzygy} is a zero at the level of
polynomials, i.e. it is an integrand relation.
However, its form allows one to easily apply integration by parts in order to
end up with an interesting relation that is between integrals.
Specifically, by pre-multiplying \cref{eq:MasterSyzygy} with
$\frac{B^{\gamma-1}}{\prod_{e \in \text{props}(\Gamma)} z_e^{\nu_e}}$, we can rewrite it as
\begin{equation}
   0 = \frac{1}{\prod_{e \in \text{props}(\Gamma)} z_e^{\nu_e}} \left[a_0 B^{\gamma} + \frac{1}{\gamma} \sum_{i \in \text{ISPs}(\Gamma)} a_i \partial_i [B^\gamma] + \sum_{e \in \text{props}(\Gamma)} \left(\tilde{a}_e z_e B^\gamma + \frac{1}{\gamma} \overline{a}_e z_e \partial_e [B^\gamma] \right) \right].
\end{equation}
If we now integrate this $0$ over $\mathcal{C}$, we can perform partial
integrations on the $a_i$ and $\overline{a}_e$ terms to find a relation between
Feynman integrals,
\begin{equation}
  0 = \int_{\mathcal{C}} \mathrm{d}^N \! \vec{z} \left[ B^{\gamma} \frac{S_\Gamma(\vec{a}, \vec{\nu})}{\prod_{e \in \text{props}(\Gamma)} z_e^{\nu_e}}\right] ,
\label{eq:SurfaceTermDefinition}
\end{equation}
where we define 
\begin{equation}
S_{\Gamma}(\vec{a}, \vec{\nu}) = a_0 + \sum_{e \in \text{props}(\Gamma)} \tilde{a}_e z_e - \frac{1}{\gamma} \left[\sum_{i \in \text{ISPs}(\Gamma)} \partial_i a_i
    + \sum_{e \in \text{props}(\Gamma)} (z_e \partial_e \overline{a}_e - (\nu_e - 1)\overline{a}_e)  \right].
    \label{eq:CriticalSurfaceTerm}
\end{equation}
In this way, we see that a syzygy of the form \eqref{eq:MasterSyzygy} induces a
relation between Feynman integrals of the same dimension and without raising
propagator powers. It is clear that the $S_{\Gamma}(\vec{a}, \vec{\nu})$ of
eq.~\eqref{eq:SurfaceTermDefinition} is an element of $\text{Surface}(\Gamma,
\vec{\nu})$, and in fact $S_{\Gamma}$ maps the set of syzygyies to
the set of surface terms. This leads us to define
\begin{equation}
  \text{SyzSurface}(\Gamma, \vec{\nu}) = \left\{ S_{\Gamma}(\vec{a}, \vec{\nu}) \,\, : \,\, \vec{a} \in \text{Syz}(\Gamma) \right\},
\end{equation}
the set of surface terms constructed from syzygies, a manifest subspace of
$\text{Surface}(\Gamma, \vec{\nu})$. It is important to observe that there is no
claim that the space of surface terms arising from syzygies is the full space of
surface terms. Indeed, experience tells us that this is generally not the case.
One of the contributions of this work is to develop a criteria for when we can
expect the two spaces to be equal.

\section{Integral Relations and Critical Points}
\label{sec:theory}

When working in the syzygy formalism introduced in the previous section, one is
faced with the natural question of how to construct the set of syzygies,
$\text{Syz}(\Gamma)$. In practice, this turns out to be a difficult problem.
In this work, we make progress on this problem by using geometrical methods to
identify a subset of $\text{Syz}(\Gamma)$ which, in a broad set of cases,
generate a sufficient set of surface terms for reduction to master integrals.
To begin, we decide to consider the surface term in
\cref{eq:CriticalSurfaceTerm} for large values of the dimensional regularization
parameter, while also dropping terms that vanish on the maximal cut.
The perhaps surprising decision to consider the ``large-$\epsilon$ limit'' is
motivated by the success of this strategy in the context of intersection theory,
where taking this limit induces important simplifications~\cite{Mizera:2017rqa,
Mizera:2019vvs}.
In such a regime, we find that the surface term reduces to simply
\begin{equation}
	\left. \left[ \lim_{\epsilon \rightarrow \infty} S_\Gamma(\vec{a}, \vec{\nu}) \right] \right|_{\text{cut}_\Gamma} = a_0|_{\text{cut}_\Gamma},
  \label{eq:SurfaceTermLargeEpsilon}
\end{equation}
where by $f|_{\text{cut}_\Gamma}$, we mean that we evaluate $f$ on $z_e = 0$ for
$e \in \text{props}(\Gamma)$, i.e. we evaluate $f$ on the maximal cut of $\Gamma$.
We therefore see that, in this regime, we need to only consider the $a_0$ term
on the maximal cut.
This represents a
dramatic simplification of the IBP relation, as only a single term from the
syzygy relation contributes. This observation provides a strong motivation to
study the $a_0$ term of \cref{eq:MasterSyzygy} alone.
In this section, we shall study this piece geometrically and
interpret these features in the language of algebraic geometry.
For background on this language, we direct the reader to standard textbooks such
as~\cite{cox1994ideals}.

\subsection{Syzygies and Geometry}

Our aim is to understand the unknown polynomial $a_0$ in \cref{eq:MasterSyzygy}
by considering it geometrically. In principle, the techniques introduced
here can be used to study other terms, but we leave such investigations to further work.
In order to isolate the $a_0$ term in the syzygy,
it is natural to consider setting all of the other terms in
\cref{eq:MasterSyzygy} to zero. To this end, we consider setting
\begin{align}
  \begin{split}
    \partial_i B = 0, \qquad {i \in \text{ISPs}(\Gamma)}, \\
    z_e B = z_e \partial_e B = 0, \qquad {e \in \text{props}(\Gamma)}.
  \end{split}
  \label{eq:SyzygyVariety}
\end{align}
If we consider these equations as constraints on the $\vec{z}$ variables, we see
that, for fixed external kinematics, they cut out a surface in $\mathbb{C}^N$. As the equations in
\cref{eq:SyzygyVariety} are algebraic, this surface is an algebraic variety.
We will refer to this variety as the ``syzygy'' variety associated to $\Gamma$,
which we will denote as $U_{\text{syz}}^\Gamma$.
Importantly, for any point $\vec{z}$ on the syzygy variety, \cref{eq:MasterSyzygy} reduces to
\begin{equation}
  a_0 B |_{U_{\text{syz}}^\Gamma} = 0,
  \label{eq:a0GeometricalConstraint}
\end{equation}
and we see that we have successfully isolated the $a_0$ term in the syzygy. We
therefore see that $U_{\text{syz}}^\Gamma$ is of prime importance, so we
consider its structure.
Given that a number of the defining equations in \cref{eq:SyzygyVariety}
factorize, the syzygy variety can naturally be decomposed into subvarieties\footnote{In practical explorations, one also finds that there are
further decompositions that are not manifest in \cref{eq:SyzygyVariety}. We
leave systematic understanding of these branchings to further work.}.
A first observation is that it naturally splits into
two subvarieties where $B = 0$ and $B \ne 0$. That is,
\begin{equation}
   U_{\text{syz}}^\Gamma = U_{\text{sing}}^{\subseteq \Gamma} \cup U_{\text{crit}[\log(B)]}^\Gamma.
\end{equation}
The first variety, $U_{\text{sing}}^{\subseteq \Gamma}$, corresponds to the case $B=0$. 
By consideration of \cref{eq:SyzygyVariety} we see that it is composed of
a large number of subvarieties where either propagators are cut, or derivatives
of the Baikov polynomial are set to zero. That is, to each set of edges
$\Gamma_k \subseteq \Gamma$ we consider $U^{\Gamma_k}_{\text{sing}}$, the
variety in $\mathbb{C}^N$ defined by
\begin{align}
  \begin{split}
    \partial_i B &= 0 \,\,:\,\, {i \in \text{ISPs}(\Gamma_k)},
    \\
    z_e &= 0 \,\,:\,\, {e \in \text{props}(\Gamma_k)},
    \\
    B &= 0.
  \end{split}
  \label{eq:singularSubLocus}
\end{align}
Geometrically, each $U^{\Gamma_k}_{\text{sing}}$ corresponds to
the singular locus of the Baikov polynomial on the cut $\Gamma_k$, that is,
where the surface fails to be smooth. Explicitly, $U_{\text{sing}}^{\subseteq
  \Gamma}$ is the union of all these varieties, i.e.
\begin{equation}
  U^{\subseteq \Gamma}_{\text{sing}} = \bigcup_{\Gamma_k \subseteq \Gamma} U^{\Gamma_k}_{\text{sing}}.
\end{equation}
The second variety, $U_{\text{crit}[\log(B)]}^\Gamma$ corresponds to the case
where $B \ne 0$. 
Looking once again at
\cref{eq:SyzygyVariety}, we see that if the Baikov polynomial is non-zero, then the
propagators in $\Gamma$ must be zero and hence $U_{\text{crit}[\log(B)]}^\Gamma$
is defined by the equations
\begin{equation}
  \begin{split}
    \partial_i \log(B) &= 0 \,\,:\,\, {i \in \text{ISPs}(\Gamma)},
    \\
    z_e &= 0 \,\,:\,\, {e \in \text{props}(\Gamma)},
  \end{split}
  \label{eq:critLogBVariety}
\end{equation}
where we make use of the logarithm to enforce that $B \ne 0$. Geometrically,
we see that $U_{\text{crit}[\log(B)]}^\Gamma$ is the locus where the logarithm
of the $\Gamma$-cut Baikov polynomial reaches its extremal, or ``critical'' values.

Having understood the syzygy variety itself, let us consider what it tells us
about $a_0$. Considering \cref{eq:a0GeometricalConstraint}, in order to
further isolate the $a_0$ term, we impose that the factor of $B$ does not
vanish. That is, we consider \cref{eq:a0GeometricalConstraint} restricted
to $U_{\text{crit}[\log(B)]}^\Gamma$ and find that $a_0$ must vanish there, i.e.
\begin{equation}
  a_0|_{U_{\text{crit}[\log(B)]}^\Gamma} = 0.
  \label{eq:CriticalVarietyVanishing}
\end{equation}
In words, we see that $a_0$ vanishes on the critical locus of the logarithm of
the Baikov polynomial, on the cut corresponding to the graph $\Gamma$.
This observation is of strong importance, as this process-independent
constraint on a piece of the syzygy connects the syzygy formalism to other
recent advances in the understanding of relations between Feynman integrals.
Specifically, both the Lee-Pomeransky approach to counting master
integrals~\cite{Lee:2013hzt} as well as recent advances in the application of
intersection theory to Feynman integrals~\cite{Mizera:2019vvs} make use of the
variety ${U_{\text{crit}[\log(B)]}^\Gamma}$.
It is well-known that in many cases this variety is a finite set of points and
Lee and Pomeransky showed in ref.~\cite{Lee:2013hzt} that, in such cases, the
number of these points, when counted with multiplicity, is exactly the number of
master integrals associated to the topology $\Gamma$.
Moreover, this statement is reinforced in the intersection
theory literature, where intersection numbers can be written in terms of
evaluations of the integrand on the points of
${U_{\text{crit}[\log(B)]}^\Gamma}$. It is therefore perhaps not surprising that
syzygies of Feynman integrals would also exhibit a connection to
${U_{\text{crit}[\log(B)]}^\Gamma}$.
In this work, we shall explore how to constructively use this connection to build
syzygies for Feynman integrals.

\subsection{From Geometry to Algebra}
\label{sec:GeometryToAlgebra}

In the previous subsection, we have gained a geometric insight into a piece of
the syzygy relation~\eqref{eq:MasterSyzygy}, demonstrating a connection to
the variety $U_{\text{crit}[\log(B)]}^\Gamma$.
However, the connection currently remains unconstructive. While
\cref{eq:CriticalVarietyVanishing} tells us that all $a_0$ must vanish on the
critical locus of $\log(B)$ on the cut associated to $\Gamma$, it is unclear if
this property is sufficient or simply necessary. Indeed, when considering
\cref{eq:a0GeometricalConstraint} closely, we notice a potential subtlety:
we are unable to exclude the possibility that $a_0$ must also vanish on
$U_{\text{sing}}^{\subseteq \Gamma}$.
In order to gain control of this subtlety, we shall
rephrase our geometric discussion in the
algebraic language of ideals.
Let us consider the terms of
\cref{eq:MasterSyzygy} other than $a_0B$. By inspection, we see that they
parameterize an element of the ideal
\begin{equation}
  J_{\text{syz}}^\Gamma = \langle  \partial_i B \,\, : \,\, i \in \text{ISPs}(\Gamma)\rangle + \langle z_e B, z_e \partial_e B \,\, : \,\, e \in \text{props}(\Gamma) \rangle.
  \label{eq:JSyzDefinition}
\end{equation}
Here we denote the ideal generated by $\{g_1, \ldots \}$ as $\langle  g_1,
\ldots \rangle$ and $+$ denotes the ideal sum.
The ideal $J_{\text{syz}}^\Gamma$ is an ideal of the polynomial ring $R$ defined
in \cref{eq:RingDefinition}: polynomials in Baikov variables, with coefficients
that are rational functions of external kinematics and $\epsilon$.
Importantly, the syzygy variety that we identified earlier,
$U_{\text{syz}}^\Gamma$ is the variety associated to the ideal
$J_{\text{syz}}^\Gamma$. That is\footnote{ We recall that the variety $V(J)$
associated to an ideal $J$ of $\mathbb{F}[z_1, \ldots, z_N]$, for some field
$\mathbb{F}$, is the set of $\vec{z} \in \mathbb{F}^N$ such that $p(\vec{z}) =
0$ for all $p \in J$ and refer the reader to ref.~\cite{cox1994ideals} for more
details.},
\begin{equation}
  U_{\text{syz}}^\Gamma = V(J_{\text{syz}}^\Gamma).
\end{equation}
The importance of $J_{\text{syz}}^\Gamma$ is that it encodes algebraic features
of the syzygies, such as multiplicity, which we can associate to the variety
$U_{\text{syz}}^\Gamma$.
Having seen the importance of $U_{\text{syz}}^\Gamma$, let us similarly analyze
$J_{\text{syz}}^\Gamma$.
Similar to the splitting of the associated variety, it
is possible to prove an analogous splitting for $J_{\text{syz}}^\Gamma$.
In contrast to splitting a variety, which expresses it as the union of
multiple subvarieties, an ideal can be expressed as the intersection of
other ideals, each of which is larger than the initial ideal. A particularly
relevant splitting of $J_{\text{syz}}^\Gamma$ is induced by the explicit factors
of $B$ in some of its generators. In order to perform this splitting, we make use of
the lemma proven in appendix \ref{app:splittingLemma} and write $J_{\text{syz}}^\Gamma$ as
\begin{equation}
  J_{\text{syz}}^\Gamma = J_{\text{sing}}^{\subseteq \Gamma, \mu} \cap
  J_{\text{crit}(B)}^\Gamma,
  \label{eq:JsyzSplit}
\end{equation}
where we define
\begin{align}
  \label{eq:JSingDef}
  J_{\text{sing}}^{\subseteq \Gamma, \mu} &= J_{\text{syz}}^\Gamma + \langle B^{\mu} \rangle, \\
  J_{\text{crit}(B)}^\Gamma &= \langle \partial_i B : i \in \text{ISPs}(\Gamma) \rangle + \langle z_e : e \in \text{props}(\Gamma)\rangle.
\end{align}
The two ideals get their names from their
associated varieties as $V(J_{\text{sing}}^{\subseteq \Gamma, \mu}) =
U_{\text{sing}}^{\subseteq \Gamma}$ and $V(J_{\text{crit}(B)}^\Gamma)$ is the
critical locus of the Baikov polynomial on the cut associated to $\Gamma$.
The exponent $\mu$ in \cref{eq:JSingDef} is a positive integer known as the ``saturation
index'' of $J_{\text{syz}}^\Gamma$ with respect to the Baikov polynomial.
It represents the multiplicity of the $B=0$ component of
$J_{\text{syz}}^\Gamma$, and its value is {\em a priori} unknown. Practical experience
says that it is often $1$, but there exist physical examples, such as that
discussed in \cref{sec:tthApplication}, where it is higher.

Having introduced $J_{\text{syz}}^\Gamma$ let us now consider how we can use it
to understand the $a_0$ term. From \cref{eq:MasterSyzygy} we see that $a_0$
is any polynomial in $R$, such that when you multiply it by the Baikov
polynomial, you get an element of $J_{\text{syz}}^\Gamma$. This leads us to
define the set of all possible $a_0$ terms as
\begin{equation}
  A_0^\Gamma =  \left\{ \, p \in R \,\,\, : \,\,\, p B \, \in \, J_{\text{syz}}^\Gamma \, \right\}.
  \label{eq:A0Definition}
\end{equation}
Importantly, it is not hard to see that $A_0^\Gamma$ is also an ideal of $R$.
By consideration of \cref{eq:SurfaceTermLargeEpsilon} we are therefore able
to make the remarkable statement that on the maximal cut and in the large-$\epsilon$ limit, surface terms actually have the structure of an ideal.
This observation will allow us to better understand the structure of surface
terms using methods from the theory of ideals.
In general, explicitly finding a generating set for $A_0^\Gamma$ is a
non-trivial task. For the moment, we content ourselves with making structural
statements.

We begin by observing that the set in \cref{eq:A0Definition} is actually the
definition of the ``ideal quotient'' of $J_{\text{syz}}^\Gamma$ by the ideal
generated by the Baikov polynomial.
That is, we have that
\begin{equation}
  A_0^\Gamma = J_{\text{syz}}^\Gamma : \langle B \rangle.
  \label{eq:A0AsQuotient}
\end{equation}
The relation in \cref{eq:A0AsQuotient} is the crucial constructive observation
of this work, allowing us to study the problem of determining syzygies with the
technology of ideal quotients.
A first important property of ideal quotients is that they have a geometrical
significance. Specifically, the ideal quotient can be used to
implement the set difference of two varieties: to remove one variety
from another. The relation relevant to our construction is
\begin{equation}
  V[J_{\text{syz}}^\Gamma : \langle  B^\mu \rangle] = \overline{V[J_{\text{syz}}^\Gamma] \setminus V(\langle B \rangle)} = U_{\text{crit}[\log(B)]}^\Gamma,
  \label{eq:SaturationSetDifference}
\end{equation}
where the bar represents that we take the Zariski closure\footnote{The Zariski
  closure of a set is the smallest algebraic variety that contains that set. In
  the context of ideal quotients, this has the effect of filling in ``holes'' in the
  variety. Further details will not be required for our discussion.}.
That is, if one quotients $J_{\text{syz}}^\Gamma$ by $B^\mu$, the associated
variety is $U_{\text{syz}}^\Gamma$ with the $B=0$ component removed. From the
previous section, we see that this is just $U_{\text{crit}[\log(B)]}^\Gamma$.
Importantly,
\cref{eq:SaturationSetDifference} gives us a geometrical interpretation of the
saturation index $\mu$.
As $\mu$ represents the multiplicity of the $B = 0$ component of $J_{\text{syz}}^\Gamma$,
removing it requires quotienting $J_{\text{syz}}^\Gamma$ by $B$ ``$\mu$ times''.
A second important property of ideal quotients is that they act on each
component of an intersection\footnote{That is, for $R$-ideals $A$, $B$ and $C$,
one has that $(A \cap B) : C = (A : C) \cap (B : C)$.}.
By consideration of \cref{eq:JsyzSplit}, this allows us to write $A_0^\Gamma$ as an
intersection of ideals as
\begin{equation}
  A_0^\Gamma = \left(J_{\text{crit}(B)}^\Gamma : \langle B \rangle\right) \cap \left( J_{\text{sing}}^{\subseteq \Gamma, \mu} : \langle B \rangle \right).
  \label{eq:StrictIdealQuotient}
\end{equation}
This relation allows us to robustly understand the connection of $A_0^\Gamma$ to our
geometrical considerations. First, let us consider the set of all polynomials
that vanish on $U_{\text{crit}[\log(B)]}^\Gamma$, denoted
$I(U^\Gamma_{\text{crit}[\log(B)]})$. By applying Hilbert's strong Nullstellensatz to
\cref{eq:SaturationSetDifference}, we have that
\begin{equation}
   I(U^\Gamma_{\text{crit}[\log(B)]}) = \sqrt{J_{\text{crit}(B)}^\Gamma : \langle B^\mu \rangle},
\end{equation}
where the square root of an ideal denotes taking its radical. Using elementary
properties of radicals, quotients and intersections, it is not difficult to
conclude that
\begin{equation}
  A_0^\Gamma \subseteq I(U^\Gamma_{\text{crit}[\log(B)]}).
\end{equation}
That is, in general, $A_0^\Gamma$ is a subset of the polynomials which vanish on
$U^\Gamma_{\text{crit}[\log(B)]}$.
In other words, we see that while $a_0$ must vanish on
$U_{\text{crit}[\log(B)]}^\Gamma$, it may also satisfy further non-trivial
constraints.

In practice, it turns out that there is an important case where
\cref{eq:StrictIdealQuotient} can be shown to simplify: where
$\mu = 1$. As we will discuss in \cref{sec:oneloop,sec:twoloop}, we
experimentally find that this is almost always the case for Feynman integrals.
Let us, therefore, analyze the $\mu = 1$ case in detail.
In this case, if we look to the definition of $J_{\text{sing}}^{\subseteq
\Gamma, \mu}$ in \cref{eq:JSingDef}, we see that any polynomial in $R$
multiplied by the Baikov polynomial is in $J_{\text{sing}}^{\subseteq \Gamma,
\mu}$, as the Baikov polynomial is a generator.
Hence, we see that the right intersectand of \cref{eq:StrictIdealQuotient} is
$R$ and as $R$ is the identity under intersection, we conclude that
\begin{equation}
\mu = 1 \qquad \Rightarrow \qquad A_0^\Gamma = J^\Gamma_{\text{crit}(B)} : \langle B \rangle.
\label{eq:SimplifiedIdealQuotient}
\end{equation}
That is, our ideal quotient simplifies exactly if $\mu = 1$.
By the definition of the saturation index, we see that, in this case, all $a_0$
terms must vanish on $U_{\text{crit}[\log(B)]}^\Gamma$.
An important question that we will study experimentally in this work is what
value of $\mu$ we will typically encounter. In practice, we will find that it is
most often $1$. Nevertheless, we will return to the question of how to interpret
cases where $\mu \ne 1$ in \cref{sec:NonTrivialSaturation}.

To close this section, let us note that there is a simple case where we can
easily find a generating set for $A_0^\Gamma$. It follows by constructing a simple
upper and lower bound for $A_0^\Gamma$. Specifically, one has that
    \begin{equation}
	    J_{\text{crit}(B)}^\Gamma \subseteq A_0^\Gamma \subseteq J_{\text{crit}(B)}^\Gamma :\langle B \rangle.
      \label{eq:A0Inclusion}
    \end{equation}
The left inclusion follows by inspection of the generators of
$J_{\text{crit}(B)}^\Gamma$. Each of them, when multiplied by the Baikov
polynomial, is an element of $J_{\text{syz}}^\Gamma$, and hence the left
inclusion follows. The right inclusion follows by consideration of
\cref{eq:StrictIdealQuotient}.
Now, let us consider the case that $J_{\text{crit}(B)}^\Gamma =
J_{\text{crit}(B)}^\Gamma : \langle B \rangle$. Then we have that the upper and
lower bounds of \cref{eq:A0Inclusion} are equal leading to $A_0^\Gamma =
J_{\text{crit}(B)}^\Gamma$. In such a case, as a generating set of
$J_{\text{crit}(B)}^\Gamma$ is given, then we find a generating set for $A_0^\Gamma$.
Interpreting this geometrically, we see that this corresponds to the case where
$U^{\Gamma}_{\text{sing}}$ is empty, i.e. the zero-set of the maximal cut Baikov
polynomial is a smooth variety.

\subsection{Syzygies and Critical Points}
\label{sec:SyzygySurfaceTermConnection}

Having understood the $a_0$ term in the syzygy relation
\cref{eq:MasterSyzygy} both geometrically and algebraically, we will now argue
that the $a_0$ term is of deep importance to integration-by-parts relations.
To this end, we recall the approach of Lee and Pomeransky in
ref.~\cite{Lee:2013hzt} for counting the number of master integrals associated
to a Feynman integral with graph $\Gamma$.
To begin, let us identify the space of master integrals on the maximal cut of
$\Gamma$ as the linearly independent numerators modulo surface terms and terms
that vanish on the cut. That is,
\begin{equation}
  H_\Gamma = R / (\text{Surface}(\Gamma, \vec{1}) + J_{\text{cut}}^\Gamma),
\end{equation}
where we make use of
\begin{equation}
J_{\text{cut}}^\Gamma = \langle  z_e : e \in \text{props}(\Gamma) \rangle
\end{equation}
and by $\vec{1}$ we mean that the entries of the exponent vector $\vec{\nu}$
are all 1.
The question of counting the number of master integrals can then be understood
as counting the dimension of the vector space $H_\Gamma$.
The approach of Lee and Pomeransky argues that $\mathrm{dim}(H_\Gamma)$ is encoded in the solution set of the equations
\begin{align}
  \begin{split}
    \partial_i B &= 0 \,\, : \,\, i \in \text{ISPs}(\Gamma),
    \\
    z_e  &= 0 \,\, : \,\, e \in \text{props}(\Gamma),
    \\
    B &\ne 0.
  \end{split}
  \label{eq:LeePomeranskyEquations}
\end{align}
Precisely, in the case that the solution set of \cref{eq:LeePomeranskyEquations}
is a finite number of points, then the number of master integrals,
$\mathrm{dim}(H_\Gamma)$, is equal to the number of solutions, counted with
multiplicity. As alluded to earlier, note that \cref{eq:LeePomeranskyEquations}
is entirely equivalent to \cref{eq:critLogBVariety} and so the Lee and
Pomeransky approach is counting points in
$U_{\text{crit}[\log(B)]}^\Gamma$.

In ref.~\cite{Lee:2013hzt}, in order to count the number of solutions to
\cref{eq:LeePomeranskyEquations} without directly computing the set of
solutions, the authors introduce an ``algebraic formulation'', which is
implemented in the code \texttt{MINT}. This formulation makes use of the ideal
\begin{equation}
  J_{\text{LP}}^{\Gamma}  = \langle \partial_i B \,\, : \,\, i \in \text{ISPs}(\Gamma), \quad z_e  \,\, : \,\, e \in \text{props}(\Gamma), 1 - w B \rangle_{R[w]} \cap R,
  \label{eq:RabinowitschIdeal}
\end{equation}
where $w$ is an auxiliary variable.
The explicit ideal on the right-hand-side of \cref{eq:RabinowitschIdeal} is an
ideal in the ring $R[w]$, the ring of both Baikov variables and $w$. The
intersection with the ring $R$ eliminates the variable $w$, and can be
implemented with Groebner basis techniques.
The ideal $J_{\text{LP}}^\Gamma$ can be used to compute the number of points in
$V(J^{\Gamma}_{\text{LP}})$ by exploiting special properties of an ideal $J$,
whose associated variety $V(J)$ is a finite set of points.
Specifically, the number of points, in $V(J)$ counted with multiplicity, is equal
to the number of linearly independent polynomials when they are considered
modulo the ideal $J$ \cite[Chapter 4, Corollary 2.6]{cox2005using}.
That is, one takes the ring of polynomials $R$ and considers any two elements
or $R$ that differ by an element of $J$ to be equivalent.
Denoting this equivalence class ring as $R/J$, in the case where $V(J)$ is a
finite set of points, it turns out that $R/J$ is a finite-dimensional vector
space. The dimension of this vector space, $\text{vdim}(R/J)$ gives the desired
point counting.
Applying this to the ideal $J_{\text{LP}}^{\Gamma}$ we arrive at the
Lee-Pomeransky formula for counting master integrals,
\begin{equation}
\dim(H_\Gamma) = \text{vdim}\left(R / J_{\text{LP}}^{\Gamma}\right).
\label{eq:LeePomeranskyCriterion}
\end{equation}
Importantly, $\text{vdim}\left(R / J_{\text{LP}}^{\Gamma}\right)$ can be computed without
computing the solutions to \cref{eq:LeePomeranskyEquations}. It requires only a
Groebner basis of $J_{\text{LP}}^{\Gamma}$, which can easily be computed in modern
computer algebra systems.

Naturally, the master integral counting of Lee and Pomeransky must in some way
be connected to the set of syzygies, as they are intimately related to the
construction of the surface terms. Nevertheless, it turns out that this
connection can be made much more directly.
Let us return to consider \cref{eq:RabinowitschIdeal}. An important observation
is that the ideal in \cref{eq:RabinowitschIdeal} can be identified as an
application of the so-called ``Rabinowitsch trick'' to perform the ideal
saturation of $J_{\text{crit}(B)}^\Gamma$ with respect to $B$ (see
ref.~\cite[Chapter 4, Section 4, Theorem 14 (ii)]{cox1994ideals}).
That is, we have that
\begin{equation}
  J_{\text{LP}}^\Gamma = J_{\text{crit}(B)}^\Gamma : \langle  B^\mu \rangle,
  \label{eq:RabinowitschTrick}
\end{equation}
where $\mu$ is the saturation index of $J_{\text{crit}(B)}^\Gamma$ with respect
to the Baikov polynomial.
Therefore, in the case $\mu = 1$, we have that the Lee-Pomeransky ideal is
exactly the ideal to which all $a_0$ terms must belong. That is
\begin{equation}
  \mu = 1 \qquad \Rightarrow \qquad J_{\text{LP}}^\Gamma = A_0^\Gamma
\end{equation}
This is a striking statement: the ideal involved in counting master integrals
with the Lee-Pomeransky approach also arises in the syzygy approach.

The observation of the direct connection between the syzygy approach and 
the Lee-Pomeransky approach has important consequences for considering total
derivatives.
To see this, consider that $H_\Gamma$ and $R/J_{\rm LP}^\Gamma$ are two different quotient
spaces of $R$ and as such they each furnish a decomposition of $R$ as a vector
space.
We can therefore write
\begin{equation}
    \sigma_1(H_\Gamma) + \text{Surface}(\Gamma, \vec{1}) + J_{\text{cut}}^\Gamma = \sigma_2(R/J_{\text{LP}}^\Gamma) + J_{\text{LP}}^\Gamma,
    \label{eq:RDecompEquality}
\end{equation}
where the $\sigma_i$ are canonical maps that identify the quotient spaces as
subspaces of $R$. That is, both the left- and right-hand side of
\cref{eq:RDecompEquality} can be recognized as a decomposition of $R$ into
$R/W+W$ for some subspace $W$ and are therefore equal.
Let us now assume that there exists a set of master integrals
that are linearly independent on the points of $V(J_{\text{LP}}^\Gamma)$, i.e.
on $U_{\text{crit}[\log(B)]}$.
This is the statement that we can choose the $\sigma_i$ such that
$\sigma_1(H_{\Gamma}) = \sigma_2(R/J_{\text{LP}}^\Gamma)$. In experience of
practical applications to Feynman integrals, this is found to be true so we
assume this from now on.
Under this assumption, it is a simple application of a standard fact of linear
algebra that the two spaces complementary to the $\sigma_i$ spaces in
\cref{eq:RDecompEquality} are isomorphic. Under the condition that $\mu=1$ and
$U^\Gamma_{\text{crit}[\log(B)]}$ is a set of points, we are therefore able to conclude that
\begin{equation}
	A_0^\Gamma \simeq \text{Surface}(\Gamma,\vec{1}) + J_{\text{cut}}^\Gamma.
    \label{eq:LPSurfaceCorrespondence}
\end{equation}
This equation, \cref{eq:LPSurfaceCorrespondence}, is an important result of our
work. We interpret \cref{eq:LPSurfaceCorrespondence} to say that a basis of
$A_0^\Gamma$ is in one-to-one correspondence with a basis of surface terms, up to terms
that vanish on the cut associated to $\Gamma$.

\subsection{Critical Surface Terms}
\label{sec:CriticalSurfaceTerms}

The correspondence between the space of $a_0$ parts of syzygies and surface
terms that we have just identified suggests an interesting perspective on the
syzygy formalism for surface term construction.
Specifically, under the conditions that \cref{eq:LPSurfaceCorrespondence} holds,
we should only need a set of syzygies whose $a_0$ piece is linearly independent
on the maximal cut, in order to be able to reduce to master integrals, modulo
pinch integrals.
This is a highly compelling observation as there are a large number of elements
of $\text{Syz}(\Gamma)$ that satisfy $a_0 = 0$, but
\cref{eq:LPSurfaceCorrespondence} tells us that we can neglect them.
Motivated by this, we will now define the set of ``critical surface terms'': those
with linearly independent $a_0$ piece on the maximal cut.

To begin, let us define the ``critical part'' of an element $\vec{a}$ of
$\text{Syz}(\Gamma)$ as the $a_0$ part, i.e.
\begin{equation}
    \mathfrak{c}(\vec{a}) \coloneq{} a_0.
    \label{eq:CriticalPartDefinition}
\end{equation}
Our discussion states that two syzygies that have the same critical part give
the same on-shell relation between Feynman integrals in the large-$\epsilon$
limit. Beyond this, the introduction of the $\tilde{a}_e$ in
\cref{eq:MasterSyzygy} terms means that two syzygies will give rise to the same
surface term if they differ by an $R$-linear combination of syzygies of the form
\begin{equation}
  a_0 = z_e, \qquad \tilde{a}_e = -1.
  \label{eq:TrivialSyzDef}
\end{equation}
with all other entries being 0. We will denote the submodule of
$\text{Syz}(\Gamma)$ generated by the syzygies of \cref{eq:TrivialSyzDef} as
$\text{ZSyz}(\Gamma)$ as they give rise to surface terms which are zero.
Together, this leads us to observe that there is a natural equivalence relation
on $\text{Syz}(\Gamma)$: two elements that differ either by some $\vec{w}_1 \in
\text{ZSyz}(\Gamma)$ or $\vec{w}_2 \in \text{Syz}(\Gamma)$ such that
$\mathfrak{c}(\vec{w}_2) = 0$ should be regarded as equivalent.
Denoting this equivalence relation as $\sim$, we have
\begin{equation}
  \vec{a} \,\, \sim \,\, \vec{a} + \vec{w}_1 + \vec{w}_2, \qquad \text{where} \quad \vec{a} \in \text{Syz}(\Gamma), \quad \vec{w}_1 \in \text{ZSyz}(\Gamma) \quad \text{and} \quad \vec{w}_2 \in \ker(\mathfrak{c}).
\end{equation}
This motivates us to define the module of ``critical syzygies'' as the quotient
of the module of syzygies by this equivalence relation. That is, we define
\begin{equation}
  \text{CSyz}(\Gamma) = \text{Syz}(\Gamma) / (\ker(\mathfrak{c}) + \text{ZSyz}(\Gamma)),
  \label{eq:CSyzDefinition}
\end{equation}
where we recognize the set of elements that are equivalent to zero under $\sim$
as the module sum of the kernel of $\mathfrak{c}$ and $\text{ZSyz}(\Gamma)$.
Elements of $\text{CSyz}(\Gamma)$ are equivalence classes under $\sim$, and can
be represented by elements of $\text{Syz}(\Gamma)$. For any element $\vec{a}$ of
$\text{Syz}(\Gamma)$, we denote the associated equivalence class in
$\text{CSyz}(\Gamma)$ as $[\vec{a}]$. Conversely, given $[\vec{a}] \in \text{CSyz}(\Gamma)$,
we will refer to a (non-unique) representative $\vec{a} \in \text{Syz}(\Gamma)$ as a
lift of $[\vec{a}]$.
An important observation about $\text{CSyz}(\Gamma)$ is its module structure. By
construction, two elements of $\text{CSyz}(\Gamma)$ are inequivalent only if
their critical parts on the maximal cut are distinct. We therefore see that
\begin{equation}
\text{CSyz}(\Gamma) \simeq A_0^\Gamma/J_{\text{cut}}^\Gamma.
\label{eq:CriticalSyzModuleStructure}
\end{equation}
It is therefore clear that the set of critical syzygies is much ``smaller'' than
the full set of syzygies, as it is only a rank $1$ module.

Let us now consider using $\text{CSyz}(\Gamma)$ to construct surface terms.
A technicality here is that elements of $\text{CSyz}(\Gamma)$ are equivalence
classes of syzygies. That is, $\text{CSyz}(\Gamma)$ is a quotient space of
$\text{Syz}(\Gamma)$. Nevertheless, as a quotient space of $\text{Syz}(\Gamma)$,
$\text{CSyz}(\Gamma)$ is isomorphic to some subspace of $\text{Syz}(\Gamma)$.
That is, analogous to the $\sigma_i$ of \cref{eq:RDecompEquality}, there exists
a linear map $\pi$ such that
\begin{equation}
  \pi(\text{CSyz}(\Gamma)) \subset \text{Syz}(\Gamma).
\end{equation}
Practically, defining $\pi$ can be thought of as finding a lift $\vec{a}_i$ for each basis
vector $[\vec{a}_i]$ of $\text{CSyz}(\Gamma)$. Naturally, the non-uniqueness of this
lift implies that $\pi$ is not unique.
Nevertheless, this map $\pi$ allows us to define a subspace of surface terms,
built from critical syzygies as\footnote{We note that the
  space $\text{CSyzSurface}(\Gamma, \vec{\nu})$ depends on the choice of $\pi$.
  However, under the conditions of $U_{\text{crit}[\log(B)]}^\Gamma$ being points and $\mu = 1$, the
  space depends only on $\pi$ through surface terms which vanish on the maximal
  cut. We therefore choose to suppress $\pi$ in the notation.}
\begin{equation}
    \text{CSyzSurface}(\Gamma, \vec{\nu}) = \{ S_{\Gamma}(\vec{a}, \vec{\nu}) : \vec{a} \in \pi(\text{CSyz}_{\Gamma}) \}.
    \label{eq:CSyzSurfaceDefinition}
\end{equation}
Interestingly, this construction gives a new perspective on the syzygy equation,
\cref{eq:MasterSyzygy}.
In the case where $\mu = 1$ and $U_{\text{crit}[\log(B)]}^\Gamma$ is isolated
then it can be seen as a recipe to lift elements of the Lee-Pomeransky ideal
$J_{\text{LP}}^\Gamma$ to surface terms.

The importance of the set of critical surface terms arises as we have
constructed it to be a set of surface terms that is sufficient in the 
large-$\epsilon$, maximal-cut limit and therefore can be used as a tool to fill
the space of surface terms and perform a reduction to master integrals.
To see this explicitly, consider that $S_{\Gamma}(\vec{a})$ reduces to $\mathfrak{c}(\vec{a})$
in the large-$\epsilon$ limit.
Looking to \cref{eq:CSyzSurfaceDefinition}, we therefore see that the set of
critical surface terms becomes the set of maximal-cut $a_0$ terms of
$\text{CSyz}(\Gamma)$ in this limit.
As this set of terms is $A_0^\Gamma/J^{\Gamma}_{\text{cut}}$ by
\cref{eq:CriticalSyzModuleStructure}, we therefore see that
$\text{CSyzSurface}(\Gamma, \vec{\nu})$ is isomorphic to
$A_0^\Gamma/J_{\text{cut}}^\Gamma$.
If we now consider \cref{eq:LPSurfaceCorrespondence}, we see that, given $\mu =
1$ and $U_{\text{crit}[\log(B)]}^\Gamma$ a finite collection of points, we have
that
\begin{equation}
   \text{CSyzSurface}(\Gamma, \vec{\nu}) \simeq \text{Surface}(\Gamma, \vec{1})/(\text{Surface}(\Gamma, \vec{1}) \cap J_{\text{cut}}^\Gamma),
   \label{eq:CriticalSurfaceSufficiency}
\end{equation}
where we have made use of the fact that, for two vector spaces $V$ and $W$ that
are subspaces of some larger space, $(V+W)/W \simeq V/(V\cap W)$.
Restricting now to the $\vec{\nu}=\vec{1}$ case,
we see that by \cref{eq:CSyzSurfaceDefinition} and
\cref{eq:CriticalSurfaceSufficiency} we have that $\text{CSyzSurface}(\Gamma,
\vec{1})$ is a subspace of $\text{Surface}(\Gamma, \vec{1})$ that is isomorphic
to $\text{Surface}(\Gamma, \vec{1})$, modulo surface terms that vanish on the
maximal cut.
Naturally, these surface terms that vanish on the cut can be captured by
repeating the procedure for pinch topologies.
Therefore, by iteratively constructing the set of all
$\text{CSyzSurface}(\Gamma_k, \vec{1})$ for all $\Gamma_k \subseteq \Gamma$,
one can construct the full set of surface terms.

\subsection{Multiplicity Structure}
\label{sec:NonTrivialSaturation}

To close out our discussion of syzygies and critical surface terms, we return to
interpreting the case where the saturation index $\mu \ne 1$, that is, where
the multiplicity of the $B=0$ component of the syzygy ideal
$J_{\text{syz}}^\Gamma$ is not $1$.
In the analysis of \cref{sec:SyzygySurfaceTermConnection}, this stopped us from
being able to conclude that critical syzygies provide all relations among
Feynman integrals.
In the following, we argue that by expanding our construction, we can find
further relations.
To this end, we consider an alternative construction of the syzygy formalism.
Rather than starting directly from a judiciously chosen syzygy equation such as
\cref{eq:MasterSyzygy}, we make contact with the definition of surface terms in
\cref{eq:SurfaceSpaceDefinition}.
The driving observation is that when we construct total derivatives from the
syzygy of \cref{eq:MasterSyzygy} we
implicitly restrict the structure of our total derivatives. Specifically, we
assume that the exponent of the Baikov polynomial inside the total derivative is the same
as that of the integrals that are targeted for reduction. While the constraint
is natural, releasing this constraint can allow for more integral relations, as
we will see.

We begin by considering numerator polynomials $\mathcal{N}$ that arise from
total derivatives as
\begin{equation}
  \frac{\mathcal{N}B^\gamma}{\prod_{e\in {\mathrm{props}(\Gamma)}} z_e}
  =
  \sum_{i \in {\rm ISPs}(\Gamma)} \partial_i \left[ \frac{B^{\gamma - \Delta} a_i}{\prod_{e\in {\mathrm{props}(\Gamma)}} z_e}\right]
  +
  \sum_{e' \in {\rm props}(\Gamma)} \partial_{e'} \left[ \frac{B^{\gamma - \Delta} z_{e'} \overline{a}_{e'}}{\prod_{e\in {\mathrm{props}(\Gamma)}} z_e} \right].
    \label{eq:DimShiftSeeding}
\end{equation}
Here, we introduce directly the propagator non-doubling constraint by including
a factor of $z_{e'}$ in the numerator of the second term.
Note that here we have made use of unit propagator powers for ease of analysis.
Importantly, in \cref{eq:DimShiftSeeding}, we have introduced a (non-negative)
integer $\Delta$.
If $\Delta = 0$, then this corresponds to the syzygy analysis
in \cref{sec:preliminaries}, while $\Delta > 0$ is more general.
From the perspective of the Laporta algorithm, \cref{eq:DimShiftSeeding} makes
use of ``seed'' integrals (those of which we take the derivative) with a power
of the Baikov polynomial that is lower than that of the target integrals.
Recalling that $\gamma = (D - E - l - 1)/2$, one effectively considers seed
integrals defined in $D - 2 \Delta$ dimensions, rather than $D$ dimensions.
Let us expand out the argument of the total derivative and remove the common
factor of $\frac{B^{\gamma - \Delta - 1}}{\prod_{e\in {\mathrm{props}(\Gamma)}} z_e}$, leading to 
\begin{equation}
	\mathcal{N} B^{\Delta+1} =  (\gamma - \Delta) \left[\sum_{i\in {\rm ISPs}(\Gamma)} \! a_i\partial_i B + \!\!\sum_{e \in {\rm props}(\Gamma)} \!\overline{a}_e z_e \partial_e B \right]
	+ B \! \left[ \sum_{i\in {\rm ISPs}(\Gamma)} \!\partial_i a_i + \!\!\sum_{e \in {\rm props}(\Gamma)} \!z_e \partial_e \overline{a}_e \right]\!.
  \label{eq:AugmentedSyzygy}
\end{equation}
Next, we must impose that the right-hand side of \cref{eq:AugmentedSyzygy}
is proportional to $B^{\Delta + 1}$. We see that this takes on a different
character, depending on the value of $\Delta$.
If $\Delta = 0$, then the second term on the right-hand side of
\cref{eq:AugmentedSyzygy} is already proportional to $B$ and so one only has the
constraint that the first term is proportional to the Baikov polynomial. This leads to the
style of syzygy in \cref{eq:MasterSyzygy}.

Let us consider making the right-hand side of \cref{eq:AugmentedSyzygy}
proportional to $B^{\Delta+1}$ for general $\Delta$.
We see that the second term can no longer be ignored and that the
proportionality constraints now involve not only syzygy-like terms, but also
terms involving derivatives of the $a_i$ and $\overline{a}_e$.
Such constraints provide an interesting challenge, but we leave direct
understanding of the mathematical structure of their solution to future work.
To make progress, we, therefore, consider restricting to $\gamma$ independent
solutions for the $a_i$ and $\overline{a}_e$.
While not the general class of solutions to \cref{eq:AugmentedSyzygy}, this
strategy is automatic in the $\Delta = 0$ case and we consider it more generally
due to its success there.
This allows us to break down the proportionality constraint into two separate
constraints
\begin{align}
  \label{eq:DeltaProportionalDivergenceConstraint}
	A^{[\Delta, 0]} B^{\Delta} &=  \sum_{i\in{\rm ISPs}(\Gamma)} \partial_i a_i + \sum_{e\in {\rm props}(\Gamma)}z_e \partial_e \bar{a}_e,
  \\
  \label{eq:DeltaProportionalSyzConstraint}
	A^{[\Delta, 1]} B^{\Delta + 1} &= \sum_{i\in {\rm ISPs}(\Gamma)} a_i\partial_i B + \sum_{e \in {\rm props}(\Gamma)} \overline{a}_e z_e \partial_e B,
\end{align}
where we impose that the $A^{[\Delta, i]}$ are polynomial.
Note that, for $\Delta = 0$, the constraint that $A^{[\Delta, 0]}$ is
polynomial, is solved for any values of $a_i$ or $\overline{a}_e$. However, for
higher values of $\Delta$, this constraint is non-trivial.

Let us consider the two proportionality constraints in turn. First, we see that
\cref{eq:DeltaProportionalDivergenceConstraint} depends on the
derivatives of the unknown polynomials. If we take it both on the maximal cut
and the zero locus of the Baikov polynomial, that is, we set $B = z_e = 0$, we see that 
\cref{eq:DeltaProportionalDivergenceConstraint} can be read as a statement that
vector field of the $a_i$ is divergenceless.
The second constraint, \cref{eq:DeltaProportionalSyzConstraint} depends linearly
on the unknown polynomials, and so is again a syzygy constraint.
We can make an analogous analysis to \cref{sec:GeometryToAlgebra} by decomposing
$A^{[\Delta, 1]}$ into an on-shell and off-shell part as
\begin{equation}
	A^{[\Delta, 1]} = a_0^{[\Delta, 1]} + \sum_{e \in \text{props}(\Gamma)} \tilde{a}_0^{[\Delta, 1]} z_e.
\end{equation}
The requirement that the $a_0^{[\Delta, 1]}$ is polynomial therefore implies the
constraint that $a_0^{[\Delta, 1]}$ belongs to the ideal
\begin{equation}
  A_0^{\Gamma, \Delta}  = J_{\text{syz}, \Delta}^{\Gamma}  : \langle B^{1+\Delta} \rangle,
  \label{eq:DeltaIdealMembership}
\end{equation}
where
\begin{equation}
	J_{\text{syz}, \Delta}^{\Gamma} = \left \langle \partial_i B  \,\, : \,\, i \in \text{ISPs}(\Gamma)\rangle + \langle z_e\partial_eB, \, z_e B^{1+\Delta}  \,\, : \,\, e \in \text{props}(\Gamma) \right \rangle.
\end{equation}
Note that this constraint is a generalization of the discussion of
\cref{sec:GeometryToAlgebra}, as $A_0^{\Gamma, 0} = A_0^\Gamma$.
Moreover, note that if we use \cref{eq:AugmentedSyzygy} to build the associated surface
term $\mathcal{N}$, then the large-$\epsilon$, maximal cut limit is $a_0^{[\Delta,
1]}$. Therefore, we again see that, in the large-$\epsilon$, maximal
cut limit, a surface term belongs to an ideal.

We can develop further insight into the meaning of $\Delta$ by
observing that we can rewrite $A_0^{\Gamma, \Delta}$ as\footnote{This follows by the
more general identity that for two $R$-ideals $J$, $K$ and $x$, an element of
$R$, one has that $(J+K x^N) : \langle x^N \rangle = J:x^N + K$, which can be
easily proven by simple two-sided inclusion arguments.}
\begin{equation}
A_0^{\Gamma, \Delta} = \left[ \left \langle \partial_i B  \,\, : \,\, i \in \text{ISPs}(\Gamma)\right\rangle + \left\langle  z_e\partial_eB \,\, : \,\, e \in \text{props}(\Gamma) \right \rangle \right]: \langle B^{1+\Delta} \rangle \,\, + \,\, J_{\text{cut}}^\Gamma.
\label{eq:A0DeltaRewrite}
\end{equation}
This representation of $A_0^{\Gamma, \Delta}$ tells us that $\Delta$ controls the power
of the Baikov polynomial in the quotient of the inner ideal of \cref{eq:A0DeltaRewrite}.
We therefore see that there exists a finite $\overline{\Delta}$ for which
$A_0^{\Gamma, \Delta}$ stabilizes, which can be recognized as the saturation index of the
quotient in \cref{eq:A0DeltaRewrite}.
Moreover, we see that $A_0^{\Gamma, \Delta}$ may be larger than $A_0^\Gamma$ as quotienting by
higher powers may lead to new elements of the ideal.
As we have recognized $A_0^{\Gamma, \Delta}$ as the set of surface terms in the large
$\epsilon$, on-shell limit, we can expect this construction to lead to more
surface terms in cases where $\overline{\Delta} > 0$.
We will see in \cref{sec:tthApplication} that there do exist physical examples
where $\overline{\Delta} \ne 0$, and, correspondingly, that this construction
leads to more surface terms.

\section{One-Loop Critical Syzygies}
\label{sec:oneloop}

Having introduced the theory of critical syzygies in the previous section, here
we begin their practical study by considering them at one loop. At this loop
order, it is well understood that there is at most one master integral
associated to each topology. In this section, we will discuss how this statement
arises by using critical syzygies to explicitly construct surface terms.
An important practical aspect that we study is that of power-counting
constraints. It is well known that the Feynman rules of gauge theory lead to an
upper bound on the total polynomial degree of the numerators that one must
consider in an amplitude calculation.
Specifically, letting $|\Gamma|$ be the number of edges in $\Gamma$,
the numerator associated to $\Gamma$ is a polynomial in
\begin{equation}
  R^{(\Gamma, 1)} = \left\{ \, p \, \in \, R \,\, : \,\, \mathrm{deg}(p) \le |\Gamma| \, \right\},
  \label{eq:OneLoopPowerCountingSpace}
\end{equation}
where $\mathrm{deg}(p)$ is the total polynomial degree of $p$ in Baikov
variables and the index $1$ in $R^{(\Gamma, 1)}$ denotes that it is the
one-loop power-counting space.
Concretely, our aim is to construct a basis of the space of critical surface
terms for Feynman integrals with unit propagator powers that are compatible with
power counting, i.e. a basis of $\text{CSyzSurface}(\Gamma, \vec{1}) \cap
R^{(\Gamma, 1)}$.

To begin our analysis, let us consider the discussion at the end of
\cref{sec:GeometryToAlgebra}, which tells us that non-trivial calculation
is only required if the zero-locus of the $\Gamma$-cut of the Baikov polynomial
is a smooth surface. Explicitly we must check if $U_{\text{sing}}^\Gamma$ is
empty, that is, if there are any solutions to \cref{eq:singularSubLocus} for
$\Gamma_i = \Gamma$.
To understand this, an important observation is to recall
that, at one loop, the Baikov polynomial is at most quadratic in each Baikov
variable.
Without loss of generality, we will order the Baikov parameters such that $z_0,
\ldots, z_{I-1}$ label the $I$ irreducible scalar products and $z_{I}, \ldots,
z_{N-1}$ label the propagators. It is then easy to write the one-loop Baikov
polynomial as
\begin{equation}
  B =
  \frac{1}{2}
  (\vec{z}_i, \vec{z}_e, 1)
  \left(
  \begin{array}{ccc}
      \mathcal{H}_\Gamma & X & \vec{\mathcal{B}_i}  \\
      X^T & O & \vec{\mathcal{B}_e}\\
      \vec{\mathcal{B}_i}^T & \vec{\mathcal{B}_e}^{T} & \mathcal{B}_0
  \end{array}
   \right)
   \left(
   \begin{array}{c}
     \vec{z}_i \\
     \vec{z}_e \\
     1
   \end{array}
   \right),
    \label{eq:BaikovAsQuadric}
\end{equation}
where we gather the ISPs and propagators into $\vec{z}_i$ and $\vec{z}_e$
respectively and the $\mathcal{H}_\Gamma$ and $O$ are symmetric matrices of side-length $I$ and
$N-I$ respectively.
Notice that the matrix $\mathcal{H}_{\Gamma}$ has been defined so that it is the Hessian of the
cut Baikov polynomial, i.e.
\begin{equation}
  \left(\mathcal{H}_{\Gamma}\right)_{ij} = \partial_i \partial_j B|_{\text{cut}_\Gamma}.
\end{equation}
The representation of \cref{eq:BaikovAsQuadric} allows us to easily compute the
partial derivatives of the Baikov polynomial with respect to an ISP as
\begin{equation}
  \vec{\partial_i} B = \mathcal{H}_\Gamma \vec{z}_i + X \vec{z}_e + \vec{\mathcal{B}_i}.
  \label{eq:BaikovPartials}
\end{equation}
It is therefore clear that the critical locus of the $\Gamma$-cut Baikov
polynomial at one loop is given by an intersection of hyperplanes.
The dimensionality of this
configuration is controlled by the rank of the rectangular matrix $(\mathcal{H}_\Gamma \, \, \vec{\mathcal{B}}_i)$.
For the moment, we will proceed with the assumption that this rank is maximal, 
and so the critical locus is given by a single point. In this situation we can
solve for the ISPs as a function of the Baikov derivatives, which allows us to
rewrite the Baikov polynomial on the maximal cut as
\begin{equation}
  B|_{z_e = 0 \, : \, e \in \text{props}(\Gamma)} = \frac{1}{2} (\vec{\partial_i} B, 1)
  \left(
  \begin{array}{cc}
      \mathcal{H}_{\Gamma}^{-1} & 0 \\
      0 & \mathcal{B}_0 - \vec{\mathcal{B}_i}^T \mathcal{H}_\Gamma^{-1} \vec{\mathcal{B}_i}
  \end{array}
   \right)
   \left(
   \begin{array}{c}
     \vec{\partial_i} B \\
     1
   \end{array}
   \right)
    \label{eq:QuadricDerivativeExpression}
\end{equation}
Taking \cref{eq:QuadricDerivativeExpression} on $B = \partial_iB = 0$, we see
that the smoothness condition is translated into the algebraic constraint on the
external kinematics that the constant term in
\cref{eq:QuadricDerivativeExpression} is non-zero.

\subsection{Regular Cases}

Let us begin with a class of critical syzygies that clearly arise without
involved calculation. At one loop, they turn out to generate the full collection
of surface terms in the case where the number of master integrals associated to
$\Gamma$ is 1, and are almost sufficient in the case where the number of master
integrals is 0. They (non-manifestly) contain the set of surface terms in the
OPP basis~\cite{Ossola:2006us} as well as those that allow for reduction of
$\epsilon$-dimensional numerators (see e.g.~\cite{Abreu:2017xsl}).
For each ISP $z_i$, let us consider a syzygy where
\begin{equation}
  a_0 = \partial_i B, \quad a_i = - B, \quad a_{j \ne i} = 0,
  \quad \text{and} \quad
  \tilde{a}_e = \overline{a}_e = 0. \label{eq:trivialSyzygy}
\end{equation}
This clearly solves \cref{eq:MasterSyzygy} as all we have done is to take
anti-symmetric combinations of the generators. Indeed, they are a subset of
``principal syzygy'' solutions to \cref{eq:MasterSyzygy}.
For this reason, we
will denote each such principal critical syzygy as $\vec{a}_i^{p}$.
As the solution set of \cref{eq:MasterSyzygy} has the structure of a
module, we can multiply any solution by a polynomial and still get a solution.
We therefore consider the syzygy $\lambda \vec{a}_i^{p}$ which gives rise to the surface term
\begin{equation}
  S_{\Gamma}(\lambda \vec{a}_i^{p}) = \lambda \alpha_i + \frac{B \partial_i \lambda}{\gamma+1},
\end{equation}
where we implicitly use that all of the $\nu_i = 1$, suppressing the notation
and define
\begin{equation}
  \alpha_i = \partial_i B.
\end{equation}
Due to the fact that, at one loop, the Baikov polynomial is quadratic in all
variables, the $\alpha_i$ are degree 1 in Baikov variables. Moreover, for
generic kinematics they are linearly independent. Therefore, they
form a natural set of variables on the cut associated to $\Gamma$, and we will
phrase our surface term construction in terms of them.

Let us now consider using the principal critical syzygies to build surface
terms. The task is to choose an appropriate set of $\lambda$ such that we have a
basis of the full space of associated surface terms on the cut associated to
$\Gamma$, while remaining in the one-loop power counting space, $R^{(\Gamma,
  1)}$. To this end, let us first observe that, in
the large-$\epsilon$ limit
\begin{equation}
  \lim_{\epsilon \rightarrow \infty}   S_{\Gamma}(\lambda \vec{a}_i^{p}) \in
  J^{\Gamma}_{\text{crit}(B)}.
\end{equation}
We can therefore regard the full $S_{\Gamma}(\lambda \vec{a}_i^{p})$ as a
prescription to lift an element of $J^{\Gamma}_{\text{crit}(B)}$ to a surface
term.
It is easy to see that, given that the $\alpha_i$ are linear in the $z_i$,
$J^{\Gamma}_{\text{crit}(B)}/J^{\Gamma}_{\text{cut}}$ is just the space
of all monomials in the $\alpha_i$ with degree at least 1.
By this argumentation, through an appropriate choice of $\lambda$ and $i$ we can
generate a surface term associated to every such monomial. These surface terms
are clearly linearly independent, as their large-$\epsilon$ limit is a linearly
independent set of monomials in $\alpha_i$. Moreover, as
\begin{equation}
   \mathrm{deg}\left( S_{\Gamma}(\lambda \vec{a}_i^{p})  \right) = \mathrm{deg}\left( \lim_{\epsilon \rightarrow \infty}   S_{\Gamma}(\lambda \vec{a}_i^{p})  \right) = \mathrm{deg}(\lambda) + 1, \label{eq:degSurfacelargeEps}
\end{equation}
we see that, first, the lifting procedure from a monomial to a surface term does
not change the power counting away from that of the large $\epsilon$ limit and,
second, by consideration of \cref{eq:OneLoopPowerCountingSpace}, we have a degree
bound on the monomial $\lambda$ which is easy to satisfy.

To consider this procedure more concretely, let us construct a series of surface
terms for a box diagram $\Gamma$ that is a subtopology of a top-level pentagon.
To this end, we begin by working in the Baikov parameterization of the pentagon, where
$\gamma = -1 - \epsilon$.
Power counting limits us to at most degree
$4$ numerator polynomials for the box. The associated set of surface terms is
then
\begin{equation}
  \left\{  \alpha_0^{n+1} - \frac{n}{\epsilon} B \alpha_0^{n-1} (\partial_0)^2 B \quad : \quad n \in \{0, 1, 2, 3\} \right\},
  \label{eq:trivialSurfaceOneloop}
\end{equation}
where we denote the ISP of the box as $\alpha_0$.
It is clear that all of these functions are linearly independent and that they
are linearly independent of the scalar integral, which we can take as our
master, as expected.

\subsection{Singular Cases}

From the discussion at the top of the section, there are naturally two
situations where the non-singularity of the Baikov polynomial on the maximal cut
comes into question. The first is when the Hessian of the cut Baikov polynomial
is not invertible. The second is when the constant term in
\cref{eq:QuadricDerivativeExpression} vanishes, and hence the Baikov polynomial
gives a singular variety. Let us consider these two cases in turn.

Firstly, we consider the Hessian of the cut Baikov. In order to do this, we
write the Baikov polynomial in a special form, making use of the well-known
``Cayley-Menger'' trick.\footnote{This trick is intimately related to the
  so-called ``Embedding-space formalism''~\cite{Simmons-Duffin:2012juh,Abreu:2017xsl}.}
To employ the Cayley-Menger trick, let us begin by defining the momenta $q_k$
and masses $m_k$ as those associated to the Baikov variable $z_k$. That is,
\begin{equation}
  z_k = (\ell - q_k)^2 - m_k^2.
\end{equation}
This allows us to write the Baikov polynomial as
\begin{equation}
  B =
  \frac{(-1)^{N+1}}{2^N}
  \mathrm{det} \left(
    \begin{array}{ccccc}
      0 & z_0 & \ldots & z_{N-1} & 1 \\
      z_0 & C_{00} & \ldots & C_{(N-1) 0} & 1 \\
      \vdots & \vdots & \ldots & \vdots & \vdots \\
      z_{N-1} & C_{(N-1) 0} & \ldots & C_{(N-1)(N-1)} & 1 \\
      1 & 1 & \ldots & 1 & 0 
    \end{array}
  \right),
  \label{eq:BaikovCayleyMenger}
\end{equation}
where $C_{kl} = (q_k - q_l)^2 - m_k^2 - m_l^2$ is the so-called ``Cayley
matrix''. The benefit of this notation is that it allows us to write the entries
of the matrices in \cref{eq:BaikovAsQuadric} as minors of the matrix
\begin{equation}
  \mathcal{C} =
  \left(
  \begin{array}{cc}
    C & \vec{1} \\
    \vec{1}^T & 0
  \end{array}
  \right),
\end{equation}
where the $\vec{1}$ represents a column containing just $1$s. Note that, again
by the Cayley-Menger trick, we have that
\begin{equation}
  G(q_1, \ldots q_{N-1}) = \frac{(-1)^{N}}{2^{N-1}} \det(\mathcal{C}),
\end{equation}
i.e. it is the Gram determinant associated to the Baikov parameterization. We
will denote the minor of $\mathcal{C}$ where row $i$ and column $j$ have been
removed as $\mathcal{C}[\hat{i}, \hat{j}]$.
With this notation, differentiation of \cref{eq:BaikovCayleyMenger} gives
\begin{equation}
  \left( \mathcal{H}_{\Gamma} \right)_{ij} = \frac{(-1)^{i+j+N}}{2^{N-1}} \mathcal{C}[\hat{i}, \hat{j}], \qquad i, j \in [0, \ldots, I-1].
\end{equation}
To understand if $\mathcal{H}_\Gamma$ is invertible, we compute its determinant. By Jacobi's
theorem on complementary minors this is given by
\begin{equation}
  \det(\mathcal{H}_\Gamma) = \left[\frac{(-1)^N}{2^{N-1}}\right]^{I}\det(\mathcal{C})^{I-1} G_\Gamma,
\end{equation}
where
\begin{equation}
    G_\Gamma = \det\left(\mathcal{C}_{ef}, \qquad e, f \in [I, \ldots, N-1, N]\right)
\end{equation}
is the determinant of the minor of $\mathcal{C}$ corresponding to
the propagators of $\Gamma$. By the Cayley-Menger trick, we have that this
determinant is a constant multiple of the Gram determinant that we associate to
$\Gamma$.\footnote{We define $G_\Gamma$ of a tadpole graph, that involves no momenta, to be 1.}
The case where $\det(\mathcal{H}_\Gamma)$ vanishes because $\det(\mathcal{C})$
vanishes is not of interest as, looking to \cref{eq:BaikovRepresentation}, we
see that it corresponds to a region of phase-space where the Baikov
parameterization itself is degenerate.
We therefore conclude that the critical locus of the Baikov polynomial at one
loop can only be non-isolated if the Gram determinant of the external momenta of
$\Gamma$ is zero. For fixed-angle scattering at one loop, this can occur in only
one case where the associated integrals are not scaleless: the ``external leg
correction''-bubble that we depict in \cref{fig:OneLoopNonIsolated}.
In practice, it turns out that all maximal minors of the rectangular matrix
$(\mathcal{H}_\Gamma \,\, \vec{B}_i)$ vanish and hence the critical locus of
this topology is not a finite set of points.
In such cases, critical syzygies are insufficient for a complete reduction to
master integrals. We therefore leave further study to future work.
\begin{figure}
\includegraphics[scale=1.5]{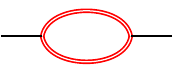}
  \centering
  \caption{The single one-loop graph in fixed-angle scattering for which the critical locus
           is not isolated. The red lines represent a field of mass $m$ (e.g. a
           top). The black lines represent massless particles, that is $p^2 =
           0$. The associated Feynman integral is only not scaleless for $m \ne 0$.}
  \label{fig:OneLoopNonIsolated}
\end{figure}

If the Hessian of the Baikov matrix is of full rank on the maximal cut, then the
critical locus is isolated. As stated earlier, it remains to check the
non-singularity condition. To this end, let us rewrite the constant term
of~\cref{eq:QuadricDerivativeExpression} by recognizing it as a Schur complement
as
\begin{equation}
\det\left(
  \begin{array}{cc}
    \mathcal{H}_\Gamma & \vec{\mathcal{B}_i} \\
    \vec{\mathcal{B}}_i^T & \mathcal{B}_0
  \end{array}
\right)
= \left( \mathcal{B}_0 - \vec{\mathcal{B}_i}^T \mathcal{H}_\Gamma^{-1} \vec{\mathcal{B}_i} \right) \det(\mathcal{H}_\Gamma).
\end{equation}
It therefore remains to compute the determinant on the left-hand-side. This is
easily achieved using the same logic as the computation of $\det(\mathcal{H}_\Gamma)$, only
noting that we now include the last row and column of $\mathcal{C}$, hence we
have that
\begin{equation}
\det\left(
  \begin{array}{cc}
    \mathcal{H}_\Gamma & \vec{\mathcal{B}_i} \\
    \vec{\mathcal{B}}_i^T & \mathcal{B}_0
  \end{array}
\right)
= \left[\frac{(-1)^N}{2^{N-1}}\right]^{I+1} \det(\mathcal{C})^I C_\Gamma ,
\end{equation}
where
\begin{equation}
C_\Gamma = \det( C_{ef}, \qquad e,f \in [I, \ldots, N-1]),
\end{equation}
is the minor of the Cayley-matrix $C$, associated to the propagators that are
cut. Altogether, we find that we can write
\begin{equation}
\mathcal{B}_0 - \vec{\mathcal{B}_i}^T \mathcal{H}_\Gamma^{-1} \vec{\mathcal{B}_i} = \frac{(-1)^{N}}{2^{N-1}} G(q_1, \ldots, q_{N-1}) \frac{C_\Gamma}{G_\Gamma}.
\end{equation}
We therefore see that the zero locus of the $\Gamma$-cut Baikov polynomial is a
singular variety either if $G(q_1, \ldots, q_{N-1}) = 0$ or if $C_\Gamma = 0$.
As earlier, we discard the zero-Gram case and consider the more interesting case,
$C_\Gamma = 0$. It is well-known that this corresponds to
the occurrence of first-type Landau singularities (see, e.g.~\cite{Mizera:2021icv} for recent
discussion). Examples of diagrams that exhibit this phenomenon are infra-red
divergent triangle graphs, such as those that we exemplify in
\cref{fig:VanishingCayleyExamples}.
\begin{figure}
    \centering
\includegraphics[scale=1.5]{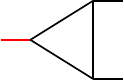}\hspace{3em}
\includegraphics[scale=1.5]{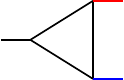}\hspace{3em}
\includegraphics[scale=1.5]{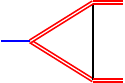}
  \caption{Representative examples of one-loop diagrams which have vanishing
    Cayley determinant. Black lines represent massless particles. Doubled red
    lines represent massive particles. Blue or single-red lines represent
    off-shell particles.
    For these examples, the set of critical syzygies is
    larger than the set generated by principal syzygies and further computation
    is required.}
  \label{fig:VanishingCayleyExamples}
\end{figure}
Here, the set of critical syzygies becomes larger than the set generated by
principal syzygies.
Specifically, as the critical locus is already a point, we expect that there is
exactly one interesting non-principal syzygy. Concretely, the ideal generated by
the partial derivatives does not contain the constant polynomial, whereas
$A_0^\Gamma$ does, hence there must be a syzygy where $a_0$ is a
constant. This is most easily displayed in the Baikov representation associated
to the cut, that is, the one where there are no ISPs. Considering the case of
the one-mass triangle in \cref{fig:VanishingCayleyExamples}, we explicitly find
the syzygy
\begin{equation}
0 = a_0 B + \overline{a}_1 z_1 \partial_1 B + \overline{a}_2 z_2 \partial_2 B + \overline{a}_3 z_3 \partial_3 B + \tilde{a}_2 z_2 B,
\end{equation}
where the components of the syzygy are given by
\begin{align}
  \begin{split}
    a_0 &= -2 s^2, \\
    \overline{a}_1 &=  s^2 - s [z_1 - z_3] + 4 z_2[2 s - 2 z_3 + z_2 - z_1], \\
    \overline{a}_2 &= 2 s^2 + 6 s z_2 + 4 z_2^2 + s (z_1 + z_3) - 4 [z_1 z_2 + z_3 z_2 + z_1 z_3], \\
    \overline{a}_3 &= s^2 - s [z_3 - z_1] + 4 z_2[2 s - 2 z_1 + z_2 - z_3 ], \\
    \tilde{a}_2 &= - 8 (s + z_2 - z_1 - z_3).
    \end{split}
\end{align}
It would be interesting to understand if the full syzygy could be determined
from geometrical arguments, but we leave this to future work.
This syzygy then gives rise to a surface term,
\begin{equation}
  S_\Gamma(\vec{a})
  = - 2 s^2 - 8 z_2 (s + z_2 - [z_1  + z_3])\left( 1-\frac{1}{\gamma} \right) + \frac{s}{\gamma} (z_1 + z_3 + 2 z_2),
\end{equation}
which manifestly becomes constant on the triangle cut $z_1 = z_2 = z_3 = 0$, and
hence provides a reduction relation for the scalar triangle.

\section{Critical Surface Terms at Two Loops}
\label{sec:twoloop}

The construction of critical surface terms at two loops is a more complicated
affair than at one loop. At two loops, the structure of the Baikov polynomial is
more involved than a quadric, and so we are more constrained in our
ability to analytically construct critical surface terms.
In this section, we study the question: can we computationally use critical
surface terms to perform reduction to master integrals?
While surface terms from standard syzygies have
been found to be sufficient in many calculations, as critical surface terms are
effectively a subset, this question should be studied.
As at one loop, in practical applications to two-loop integral
reduction, surface terms are constructed satisfying power-counting constraints.
For concreteness, we again adopt gluonic power counting, defining the two-loop
power-counting space of numerator polynomials associated to a diagram $\Gamma$
as
\begin{align}
  R^{(\Gamma, 2)} = \big\{a \,\in \,R \quad : \quad {\rm deg}_{\ell_1}(a) \leq |\Gamma_1|, \,\,\, {\rm deg}_{\ell_2}(a) \leq |\Gamma_2|, \,\,\, {\rm deg}_{\ell_1}(a) + {\rm deg}_{\ell_2}(a) < |\Gamma| \big\}\,,
  \label{eq:TwoLoopPowerCountingSpace}
\end{align}
where $|\Gamma_{i}|$ is the the number of edges in $\Gamma$ that depends on loop
momentum $i$, $|\Gamma|$ is the total number of edges in $\Gamma$ and
$\mathrm{deg}_{\ell_i}$(a) is the polynomial degree of $a$ in loop momentum $i$.
In order to reduce the integrals arising in a QCD amplitude that are associated
to the set of propagators $\Gamma$ and exponents $\vec{\nu}$, an integral
reduction program needs to explicitly construct elements of
\begin{equation}
  \text{Surface}^R(\Gamma, \vec{\nu}) = \text{Surface}(\Gamma, \vec{\nu}) \cap R^{(\Gamma, 2)},
  \label{eq:RenormalizableSurface}
\end{equation}
surface terms that live within power counting.
The goal of this section is to study computational construction of surface terms
from critical syzygies.
Naturally, the isomorphism in \cref{eq:CriticalSurfaceSufficiency} implies that, if
a diagram has isolated critical points and $\mu=1$, critical surface terms are
in principle sufficient to reduce all tensor integrals associated to this
diagram.
However, it is {\em a priori} unclear if it is possible to construct a
finite-dimensional restriction of $\text{CSyzSurface}(\Gamma, \vec{\nu})$ that
lives within power counting in a
finite number of steps.
That is, can we computationally construct a set of critical syzygies that give
rise to a subspace $\text{CSyzSurface}^R(\Gamma, \vec{\nu}) \subset
\text{Surface}^R(\Gamma, \vec{\nu})$ such that
\begin{align}
  \text{CSyzSurface}^R(\Gamma, \vec{\nu})
  \simeq
  \text{Surface}^R(\Gamma, \vec{\nu}) / [\text{Surface}^R(\Gamma, \vec{\nu}) \cap J^\Gamma_{\text{cut}}]?
  \label{eq:isomorphismWithPC}
\end{align}
In this section, we study this question in non-trivial examples and thereby
provide evidence that critical surface terms are a useful tool for the reduction
of tensor integrals in gauge theory.
Unlike in the one-loop case, generic analytic construction of surface terms is a
non-trivial problem. Indeed, at two loops, one experimentally finds that $B=0$
on the maximal cut is a singular variety. That is, $U_{\text{sing}}^\Gamma$ is
never empty, and, therefore, principal syzygies are insufficient.
For this reason, to study the question posed in \cref{eq:isomorphismWithPC}, we
set up an algorithm to determine critical syzygies within power counting.
We then consider a cutting-edge example: the leading-color contributions to the
two-loop $t\bar{t}H$ amplitude.
We demonstrate that, for the light-quark contributions (whose master integrals
were recently computed~\cite{FebresCordero:2023pww}) the critical locus is
always a finite set of points and that the $\mu = 1$ hypothesis holds in all but
one case. In this way we are able to provide a positive answer to the question
raised in \cref{eq:isomorphismWithPC} in a physically interesting example.

\subsection{Computational Construction of Critical Surface Terms}\label{sec:twoloopTechnique}

Let us now turn to the question of constructing a basis of critical surface
terms that are compatible with the power-counting constraints.
The practical strategy that we will employ is to take the definition of
$\text{CSyzSurface}(\Gamma, \vec{\nu})$ in \cref{eq:CSyzSurfaceDefinition} as a
prescription to construct critical surface terms.
Specifically, we will explicitly construct a finite subspace
$\text{CSyz}^R(\Gamma) \subset \text{CSyz}(\Gamma)$, and define
\begin{equation}
	\text{CSyzSurface}^R(\Gamma, \vec{\nu}) = \{ S_\Gamma(\vec{a}, \vec{\nu}) : \vec{a} \in \pi(\text{CSyz}^R(\Gamma) ) \}.
\end{equation}
Furthermore, we will constrain the basis of $\pi(\text{CSyz}^R(\Gamma))$ such
that all surface terms are within power counting.
An important observation is that this construction of
$\text{CSyzSurface}^R(\Gamma, \vec{\nu})$ develops a strong dependence on the
choice of $\pi$.
To see this, consider specifying $\pi$ by specifying a basis $\{[\vec{a}_1], \,
\ldots \}$ of $\text{CSyz}(\Gamma)$ and their associated lifts $\vec{a}_i \in
\text{Syz}(\Gamma)$.
Note that, for any choice of lift $\vec{a}$ of some $[\vec{a}]$, the large-$\epsilon$, on-shell limit
of $S_\Gamma(\vec{a}, \vec{\nu})$ is independent of the specific choice.
However, the polynomial degree of $S_\Gamma(\vec{a}, \vec{\nu})$ does depend on
the specific choice of the lift, through the sub-leading terms in the large-$\epsilon$, on-shell limit.
We can therefore see that to computationally construct $\text{Surface}^R(\Gamma,
\vec{\nu})$ via critical syzygies will require carefully choosing the lifts.

A conceptually simple way to construct $\text{CSyz}^R(\Gamma)$ is to write an
ansatz for some $\vec{a}$ that is a syzygy whose associated surface term lives
within power counting. One then uses linear algebra to find a basis of such
terms which are linearly independent in the large-$\epsilon$, on-shell limit.
Nevertheless, due to the high polynomial degrees involved, this is a
computationally demanding approach.
Instead, we construct an approach to generate critical surface terms which
exploits the module properties of $\text{CSyz}(\Gamma)$, thereby keeping the
computation tractable.
Specifically, we break the problem down into two steps. We first construct a
generating set of $\text{CSyz}(\Gamma)$ using linear algebra methods given by
low-degree representatives in $\text{Syz}(\Gamma)$.
We then construct a basis of $\text{CSyz}^R(\Gamma)$ by taking polynomial
combinations of our generating set, constraining the combinations such the
associated surface terms satisfy the power-counting constraints and are linearly
independent in the large-$\epsilon$, on-shell limit.
In the following, we elaborate on the details of each component of our approach,
highlighting tricks for reducing the size of the involved linear systems. We
provide a summary of the approach in \cref{sec:algorithm}.

\subsubsection{Constructing a Generating Set of Critical Syzygies}
\label{sec:CriticalSyzGenerators}

Let us consider the question of constructing a generating set of the module of
critical syzygies, $\text{CSyz}(\Gamma)$, defined in \cref{eq:CSyzDefinition}.
Our approach will be to use linear algebra to construct a generating set
$\{[\vec{v}_1], \ldots \}$ of
$\text{CSyz}(\Gamma)$ by finding representatives $\vec{v}_i$ of these
elements in $\text{Syz}(\Gamma)$. Note that a generating set of a module
is not unique and it is a non-trivial problem to construct a minimal generating
set of a module.
Instead, we will construct a generating set of
$\text{CSyz}(\Gamma)$ up to a input power-counting bound. This is naturally
not a minimal construction and, therefore, the number of generators will depend on this bound.
We will then check if this set generates $\text{CSyz}(\Gamma)$ by exploiting the
isomorphism of $\text{CSyz}(\Gamma)$ to $A_0^\Gamma / J^\Gamma_{\text{cut}}$.

We begin by expressing all unknowns of the syzygy equation, \cref{eq:MasterSyzygy}, as polynomials in
Baikov variables with coefficients that are rational functions of kinematics.
Writing all unknowns in \cref{eq:MasterSyzygy} as some $a_j$ for simplicity,
we parameterize them as
\begin{equation}
\label{eq:criticalsyzygyAnsatzGeneral}
   a_j = \sum_{k:|\vec{\alpha}_k|\leq N_j} c_{jk}(\vec{s}) Z^{\vec{\alpha}_k},
\end{equation}
where we write a monomial in Baikov variables as
\begin{equation}
  Z^{\vec{\alpha}}:=\prod_{m=1}^{N} z_m^{\alpha_{m}}, \quad \alpha_{m}\in \mathbb{Z}_{\ge 0},
\end{equation}
$|\vec{\alpha}| = \sum_m \alpha_{m}$ is the total degree and the coefficients
$c_{jk}$ are unknown rational functions in external kinematics
$\vec{s}$.\footnote{While $\text{CSyz}(\Gamma)$ is an $R$-module, and so elements
of $\text{CSyz}(\Gamma)$ can depend on $\epsilon$, the syzygy equation
\cref{eq:MasterSyzygy} is independent of $\epsilon$ and so the generating set
need not depend on $\epsilon$. Hence, choosing the $c_{jk}$ to be $\epsilon$ independent is no restriction.}
In \cref{eq:criticalsyzygyAnsatzGeneral}, we denote the total degree of $a_j$ in Baikov
variables as $N_j \in \mathbb{Z}_{\ge 0}$.
The set of $N_j$ are the input power-counting bounds to our procedure and
control its computational cost.
A useful computational observation is that there are a number of easily
identifiable syzygies $\vec{a}$ which lead to $S_\Gamma(\vec{a}, \vec{\nu}) =
0$. Specifically, this is the set of syzygies, $\text{ZSyz}(\Gamma)$ discussed
in \cref{sec:CriticalSurfaceTerms}.
Such syzygies can be seen as an artefact of our decision to break up the
coefficient of the Baikov polynomial in \cref{eq:MasterSyzygy} into an on-shell part, $a_0$ and
off-shell parts, the $\tilde{a}_e$, which was useful for the analysis
of~\cref{sec:GeometryToAlgebra}. Nevertheless, computationally, such syzygies
are
irrelevant and we remove them from the ansatz by letting $a_0$ depend only on
ISPs and $\tilde{a}_e$ depend on ISPs and only on propagators $z_{e'}$ such that $e' \le e$.

By inserting the ansatz \cref{eq:criticalsyzygyAnsatzGeneral} into the syzygy
equation \cref{eq:MasterSyzygy} and requiring that the coefficient of each
monomial in Baikov variables should vanish, we convert a linear system with
polynomial unknowns $a_k$ into a linear system of unknown rational functions
$c_{jk}$.
That is, gathering all unknowns $c_{jk}$ of
\cref{eq:criticalsyzygyAnsatzGeneral} into the object $C= \bigcup_{j,k} c_{jk}$,
we reparametrize the syzygy equation \cref{eq:MasterSyzygy} as
\begin{align}
  \sum_j M_{ij} C_j = 0,
  \label{eq:criticalSyzygyEqsys}
\end{align}
where $M$ is a matrix of rational functions of the kinematics and the index $i$
runs over all monomials in Baikov variables that arise when inserting the ansatz
into \cref{eq:MasterSyzygy}.
By \cref{eq:criticalsyzygyAnsatzGeneral}, a basis of solutions to
\cref{eq:criticalSyzygyEqsys} corresponds to a basis of the degree-bounded
subspace of $\text{Syz}(\Gamma)$ controlled by the $N_j$, which we denote as
$\{\vec{v}_1, \ldots, \vec{v}_{C(\vec{N})}\}$.
Naturally, each $\vec{v}_i$ is a representative of some $[\vec{v}_i]$ of
$\text{CSyz}(\Gamma)$. We note that, as representatives $\vec{v}$ that differ by
an element with zero critical part are equivalent, the elements $\{[\vec{v}_1],
\ldots, [\vec{v}_{C(\vec{N})}]\}$ are not necessarily linearly independent.
Nevertheless, we choose to leave these redundancies in our system, as they will
be tackled at the next stage.

The set $\{[\vec{v}_1], \ldots, [\vec{v}_{C(\vec{N})}]\}$ forms a natural guess for a generating
set of the unbounded $\text{CSyz}(\Gamma)$ as, for sufficiently high $\vec{N}$ it
must generate the module, by the ascending chain condition. In order to check if
we indeed do have a generating set of $\text{CSyz}(\Gamma)$ we make use of the
fact that it is isomorphic to $A_0^\Gamma/J_{\text{cut}}^\Gamma$.
Specifically, the set $\{[\vec{v}_1], \ldots, [\vec{v}_{C(\vec{N})}]\}$
generates $\text{CSyz}(\Gamma)$ if and only if the critical parts of the
$\vec{v}_i$ generate $A_0^\Gamma/J_{\text{cut}}^\Gamma$. Equivalently,
\begin{equation}
     \langle [\vec{v}_1], \ldots, [\vec{v}_{C(\vec{N})}] \rangle = \text{CSyz}(\Gamma) \qquad \Longleftrightarrow \qquad \langle \mathfrak{c}(\vec{v}_1), \ldots, \mathfrak{c}(\vec{v}_{C(\vec{N})}) \rangle + J_{\text{cut}}^\Gamma = A_0^\Gamma.
    \label{eq:CriticalGenerationTest}
\end{equation}
To check if we indeed have a generating set of $\text{CSyz}(\Gamma)$, we can
therefore check equality of the two ideals on the right-hand side of
\cref{eq:CriticalGenerationTest}.
This can easily be performed by checking if their reduced Groebner bases are
equal.
For the ideal generated by critical parts of syzygies, this is easily performed
in the computer algebra system \texttt{Singular}~\cite{DGPS} as we have the
generating set by construction. In order to compute a Groebner basis of
$A_0^{\Gamma}$, we perform the ideal quotient of \cref{eq:A0AsQuotient}
computationally, again using \texttt{Singular}.
If the two Groebner bases are distinct then we do not have a generating set of
$\text{CSyz}(\Gamma)$, reflecting that the degree bounds $\vec{N}$ are too low.
In this way, given $\vec{N}$ we construct a generating set of
$\text{CSyz}(\Gamma)$ or report that the degree bound is too low.

Having set up an algorithm for computing a generating set of
$\text{CSyz}(\Gamma)$, let us make some practical remarks.
While it is trivial to obtain $M$ analytically, solving the system of equations
analytically is highly non-trivial due to the large size of $M$.
For this reason, we apply our approach numerically, at a given phase-space
point.
This approach allows for a number of optimizations that are frequently applied
in finite-field-based approaches, which we record here for completeness.
An important feature is that many of the $c_{jk}$ in
\cref{eq:criticalsyzygyAnsatzGeneral} are zero. This can be detected when
solving \cref{eq:criticalsyzygyAnsatzGeneral} on an initial, randomly-chosen, phase-space point $\vec{s}_0$
and imposing this constraint for later evaluations. That is, we impose
\begin{equation}
    c_{jk}(\vec{s_0}) = 0 \qquad \Rightarrow \qquad c_{jk}(\vec{s}) = 0,
\end{equation}
and, therefore, in practice we use the refined ansatz
\begin{align}
  a_{j} = \sum_{k : c_{jk}(\vec{s}_0) \neq 0} c_{jk}(\vec{s}) Z^{\vec{\alpha}_k}.
  \label{eq:skeletonA}
\end{align}
In \cref{eq:criticalSyzygyEqsys} this has the effect of removing the columns of
$M$ that correspond to $C_j(\vec{s}_0) = 0$.
Moreover, we decrease the number of unknowns even further by observing that
the $c_{jk}$ (or equivalently the $C_j$) are $\mathbb{Q}$-linearly dependent.
Analogous to the approach applied when reconstructing scattering
amplitudes~\cite{Abreu:2019odu}, with a small number of evaluations of the
$C_j(\vec{s})$, we resolve the $\mathbb{Q}$-linear dependencies and write
\begin{align}
C_j = \sum_k A_{jk} \tilde{C}_k,
\end{align}
where the entries of matrix $A$ are rational numbers and $\tilde{C}_k$ are a
linearly independent subset of the $C_j$.
Combining with \cref{eq:criticalSyzygyEqsys} leads to
\begin{align}
  \sum_{l} \tilde{M}_{jl} \tilde{C}_l = 0, \qquad \text{where} \qquad \tilde{M}_{jl} = \sum_{k: C_k(\vec{s}_0) \neq 0} M_{jk}A_{kl} .
  \label{eq:criticalSyzygyEqsysRemoveLD}
\end{align}
That is, one only has to row reduce the matrix $\tilde{M}_{jl}$. These
optimizations can result in linear systems that are hundreds of times smaller
than \cref{eq:criticalSyzygyEqsys}.

In summary, given a set of power-counting bounds $N_j$, we
construct a generating set for $\text{CSyz}(\Gamma)$ on a fixed phase-space
point $\vec{s}$.
The size of the generating set depends on $\vec{N}$ and is neither a Groebner
basis, nor a minimal generating set.
The generating set is specified analytically, though implicitly, through the
linear equation system \cref{eq:criticalSyzygyEqsysRemoveLD}, which can easily
be solved numerically.

\subsubsection{Critical Surface Terms Within the Power-Counting Window}
\label{sec:powercounting}

Using the algorithm of the previous subsection, for each diagram and an
appropriately chosen power-counting bound $\vec{N}$, we can determine a
generating set of $\text{CSyz}(\Gamma)$.
It therefore remains to use these to construct a basis of
$\text{CSyzSurface}^R(\Gamma)$.
The approach we will take to this problem is to build surface terms from
polynomial combinations of our generators set of $\text{CSyz}(\Gamma)$.
That is, we look for syzygies $\vec{w}$ such that
\begin{equation}
    \vec{w} = \sum_{j=1}^{C(\vec{N})} \lambda_j \vec{v}_j \qquad \text{and} \qquad S_\Gamma(\vec{w}, \vec{\nu})  \quad \in \quad R^{(\Gamma, 2)},
    \label{eq:PowerCountingCompatibility}
\end{equation}
where the $\lambda_j$ are polynomials in $R$ and the $\vec{v}_j$ are
representatives in $\text{Syz}(\Gamma)$ of the $[\vec{v}_j]$ in
$\text{CSyz}(\Gamma)$. As the $\vec{v}_j$ are known,
\cref{eq:PowerCountingCompatibility} is a non-trivial constraint on the
polynomials $\lambda_j$.
In this section we discuss two methods for satisfying this constraint.
In the first, we take monomial multiples of each generator $\vec{v}_j$
that satisfy \cref{eq:PowerCountingCompatibility}, finding that this is often
sufficient to find a basis of surface terms.
In the second, we write an ansatz for the $\lambda_j$ and then use linear
algebra to find $\vec{w}$ such that all terms which violate power counting vanish,
analogous to the approach used in ref.~\cite{Abreu:2023bdp}.

To begin, we parameterize our linear combination of the generators $\vec{w}$ as
\begin{equation}
	\vec{w} = \sum_{j,k} \tilde{w}_{jk}(\vec{s}) Z^{\vec{\alpha_j}} \vec{v}_k.
  \label{eq:SyzygyCombinationAnsatz}
\end{equation}
Here, the sum over $j$ runs over a suitably large set of monomials in the Baikov
variables. For practical purposes, we take this to be the set of monomials in
ISPs that satisfy the power-counting constraints.
Given the ansatz in \cref{eq:SyzygyCombinationAnsatz}, we then consider
constructing the surface term $S_{\Gamma}(\vec{w}, \vec{\nu})$. This 
can be written as a linear combination of monomials that live within power
counting and those that do not. That is,
\begin{align}
  S_\Gamma\left( \vec{w}, \vec{\nu} \right) =
  \sum_{Z^{\vec{\alpha}_i} \in R^{(\Gamma, 2)}} n_i(\vec{s}, \gamma, \tilde{w})Z^{\vec{\alpha}_i}
  + \sum_{Z^{\vec{\alpha}_i} \not\in R^{(\Gamma, 2)}} \sum_{j, k} m_{ijk}(\vec{s}, \gamma) \tilde{w}_{jk} Z^{\vec{\alpha}_i},
  \label{eq:twoloopSurfaceWithPC}
\end{align}
where the $m_{ijk}$ and $n_i$ are rational functions of kinematics and $D$ and
the $n_i$ are linear
in the $\tilde{w}_{jk}$.
In order to live within power counting, each term in
\cref{eq:twoloopSurfaceWithPC} that is not in $R^{(\Gamma, 2)}$ must vanish.
That is, we have
\begin{equation}
  \sum_{j, k} m_{ijk}(\vec{s}, \gamma) \tilde{w}_{jk} = 0,
  \label{eq:SyzCombinationPCConstraint}
\end{equation}
which is a linear constraint on the ansatz parameters $\tilde{w}_{jk}$.

We consider finding a basis of critical syzygy solutions to
\cref{eq:SyzCombinationPCConstraint} in two separate ways. The first approach is
combinatorical. Specifically, we enumerate all
possible values of $j$ and $k$, setting $\tilde{w}_{jk} = 1$ and otherwise letting the
entries of $\tilde{w}$ be zero.
We then check if this $\tilde{w}$ satisfies \cref{eq:SyzCombinationPCConstraint}.
Effectively, this amounts to taking $\vec{w} = Z^{\vec{\alpha_j}} \vec{v}_k$ and
checking if it satisfies the conditions
\begin{equation}
  \begin{split}
  w_0 + \sum_{e\in {\rm props}(\Gamma)}\tilde{w}_ez_e \quad \in \quad R^{(\Gamma, 2)},
    \\
  \sum_{i\in {\rm ISPs}(\Gamma)}\partial_i w_i + \sum_{e \in \text{props}(\Gamma)} z_e \partial_e \bar{w}_e \quad \in \quad R^{(\Gamma, 2)}.
  \label{eq:relationNjandGamma}
  \end{split}
\end{equation}
In this way, we generate a collection of $\vec{w}$ which satisfy the
power-counting bounds. Naturally, the associated set of $[\vec{w}]$ exhibit
linear dependencies. We therefore select from our collection a subset with
linearly independent $\mathfrak{c}(\vec{w})$, which is a simple linear
algebra problem.
Importantly, as $S_\Gamma(\vec{w}, \vec{\nu}) \rightarrow \mathfrak{c}(\vec{w})$
in the large-$\epsilon$, on-shell limit, we see that this allows us to count how many
independent surface terms associated to $\Gamma$ we have constructed.
If this is equal to the number of surface terms on the maximal cut, we conclude
that we have found a basis set of solutions to
\cref{eq:SyzCombinationPCConstraint} and therefore found a basis of
$\text{Surface}^R(\Gamma, \vec{\nu})$ modulo pinches and so our set of critical syzygies
is complete.

Naturally, this combinatorical approach of solving
\cref{eq:SyzCombinationPCConstraint} is not guaranteed to find all solutions, as
it may be necessary to take non-trivial linear combinations of products of
monomials and $\text{CSyz}(\Gamma)$ generators. Therefore, if the completeness
test fails, we consider directly solving \cref{eq:SyzCombinationPCConstraint} as
a linear system.
In order to easily identify a linearly independent set of critical surface terms
from our solutions, we additionally require that the $\tilde{w}_{jk}$ are
independent of
$\epsilon$.\footnote{Importantly, it can be shown that a basis of the $\epsilon$-independent
solutions of \cref{eq:SyzCombinationPCConstraint} is also a basis of the
$\epsilon$-dependent solutions of \cref{eq:SyzCombinationPCConstraint}, so this
practical trick causes no conceptual issues.}
This again allows us to certify sufficiency of the basis of surface terms by
looking only at the critical part of the involved syzygy.
Writing $m_{ijk} = m_{ijk}^{(0)} + \frac{1}{\gamma} m_{ijk}^{(1)}$, we therefore require
that the $\tilde{w}_{jk}$ satisfy
\begin{equation}
    \sum_{j,k} m^{(0)}_{ijk}(\vec{s}) \tilde{w}_{jk} = 0, \qquad \sum_{j,k} m^{(1)}_{ijk}(\vec{s}) \tilde{w}_{jk} = 0,
    \label{eq:PowerCountingConstraints}
\end{equation}
where we recall the $m_{ijk}^{(X)}$ are matrices rational in the external
kinematics.
We solve \cref{eq:PowerCountingConstraints} via linear algebra methods which
leads to a collection of syzygies $\vec{w}$ that live within power counting. We
then select from this set of syzygies, a subset with linearly independent
critical part, thereby finding a basis of $\text{CSyz}^R(\Gamma)$ and consequently,
$\text{CSyzSurface}^R(\Gamma, \vec{\nu})$.

Analogous to the construction of the generating set of $\text{CSyz}(\Gamma)$ discussed
in \cref{sec:CriticalSyzGenerators}, the linear equation system in
\cref{eq:PowerCountingConstraints} can pose a non-trivial challenge to solve. We
therefore close this subsection by discussing a number of computational optimizations.
First, note that, in practice, the $m_{ijk}^{(X)}$ are constructed from a set of
generators of $\text{CSyz}(\Gamma)$ that are known on a numerical phase-space
point $\vec{s}^{(0)}$. We therefore discuss solving \cref{eq:PowerCountingConstraints}
at $\vec{s} = \vec{s}^{(0)}$.
Next, we note that the matrices in \cref{eq:PowerCountingConstraints} are sparse
and so we apply sparse linear
algebra methods to solve them.
The sparsity of the solution basis can therefore depend on the order in which
the equations in \cref{eq:PowerCountingConstraints} are solved.
Practically, we first solve one of the two $m^{(X)}$ systems and insert the
solutions into the other in order to solve the full system, choosing the
ordering based upon the sparsity of the final solution.

Having evaluated a basis of solutions to \cref{eq:PowerCountingConstraints} on a
single phase-space point, we are able to use the structure of the resulting
basis (and the sparsity properties that it inherited from our construction
method) to ease its evaluation on further phase-space points.
Letting the $\{ \tilde{w}^{(1)}, \ldots, \tilde{w}^{(M)} \}$ be a basis of solutions to
\cref{eq:PowerCountingConstraints}, we make two structural observations about
the basis.
First, it is clear that rotating the basis by any element of $\text{GL}(M)$ will lead
to a second basis $\{\tilde{w}'^{(1)}, \ldots, \tilde{w}'^{(M)}\}$ that also
satisfies \cref{eq:PowerCountingConstraints}.
One is only able to uniquely fix a basis after specifying this $\text{GL}(M)$
freedom, which can be achieved by choosing an appropriate sub-matrix to be the
identity matrix.
In practice, an appropriate such choice is automatically performed when solving
for the $w_{jk}^{(l)}(\vec{s}_0)$ numerically.
Therefore, we interpret the structure of zeros and ones in
$\tilde{w}^{(l)}(\vec{s}_0)$ as fixing this $\text{GL}(M)$ freedom in the phase-space
independent basis.
Secondly, many of the entries of the evaluation $\tilde{w}_{jk}^{(l)}(\vec{s}_0)$ are
either zero, or identical. It is natural to interpret these zeros/equalities
as phase-space independent.
Altogether, we are able to write an ansatz for the $\text{GL}(M)$ fixed basis as
\begin{equation}
   \tilde{w}_{jk}^{(l)} = \tilde{w}_{jk}^{(l)[0]} + \sum_{m=1}^e y_m(\vec{s}) \tilde{w}_{jk}^{(l)[m]},
   \label{eq:BasisAnsatz}
\end{equation}
where $e$ is the number of distinct, non-zero (and not equal to 1) entries
of $\tilde{w}_{jk}^{(l)}(\vec{s}_0)$ and the entries of each
$\tilde{w}_{jk}^{(l)[m]}$ are either zero or one.
Inserting the basis ansatz \cref{eq:BasisAnsatz} into the power-counting
constraints \cref{eq:PowerCountingConstraints}, we end up with a linear system
for the $y(\vec{s})$ as
\begin{equation}
  \sum_{m = 1}^e A_{nm} y_m(\vec{s}) = b_n,
  \label{eq:PowerCountingOptimizedSystem}
\end{equation}
where the inhomogeneous term on the right-hand side is a consequence of the
inhomogeneous term, $w^{(l)[0]}$,  of \cref{eq:BasisAnsatz}.
In this way, the $y_m$ can be fixed by solving a simpler linear system than
\cref{eq:PowerCountingConstraints}. Importantly, $e$ is often very small and so
this represents an important speed up.

In summary, given a generating set of $\text{CSyz}(\Gamma)$, we construct a basis
of a space of surface terms $\text{CSyzSurface}^R(\Gamma, \vec{\nu}) \subset
\text{Surface}^R(\Gamma, \vec{\nu})$. The construction is performed for a given
numerical phase-space point, and optimized by structures learned from an initial
evaluation.

\subsubsection{Summary of Approach}\label{sec:algorithm}
Let us now summarize our approach for critical surface term generation. We use critical
syzygies to generate a set of surface terms within gluonic power counting
associated to a given diagram $\Gamma$ that are linearly independent on the
maximal cut associated to $\Gamma$.
By collecting these surface terms for all diagrams $\Gamma$, one then has the
necessary ingredients for integral reduction without raising propagator powers.
The resulting surface terms are implicitly stated as a set of analytic linear
systems that can be solved numerically on a given phase-space point.
The approach makes use of the Baikov representation, taking the Baikov
polynomial as input. In principle, this can be the Baikov polynomial of the top
topology, or the Baikov polynomial associated to the graph itself.
Our approach is composed of three major steps.
\begin{enumerate}
      \item Working on a numerical phase-space point, we compute the
            saturation index $\mu$ of the $J_{\text{syz}}^\Gamma$ with respect
            to the Baikov polynomial to check if it is $1$.
            We compute the dimension of $U_{\text{crit}[\log(B)]}^\Gamma$
            to check if it is zero. If either of these checks fail,
            we abort.\footnote{
    Study of the higher dimensional case is beyond the scope of this paper and
    left for future work. We discuss an ad-hoc solution to a single $\mu \ne 1$ case
    in \cref{sec:tthApplication}.
            }
      \item We construct a generating set of $\text{CSyz}(\Gamma)$ following the
            discussion of \cref{sec:CriticalSyzGenerators}. We fix an initial
            polynomial degree bound for the ansatz, $\vec{N}$, typically $N_j = 2$.
            We construct a tentative generating set with this degree bound and
            check if it generates $\text{CSyz}(\Gamma)$ via
            \cref{eq:CriticalGenerationTest}. If not, we
            raise the degree bounds in an ad-hoc manner until we successfully
            find a sufficiently high degree bound.
            The resulting generating set is presented as the solution to an
            analytic linear equation system, \cref{eq:criticalSyzygyEqsysRemoveLD}, which we
            optimize following the discussion in the latter part of \cref{sec:CriticalSyzGenerators}.
       \item Given this generating set of $\text{CSyz}(\Gamma)$, we construct a
             basis of $\text{CSyzSurface}^R(\Gamma, \vec{\nu})$ as discussed in
             \cref{sec:powercounting}. To maintain
             compact results, we employ two possible strategies to construct
             these surface terms. We first construct monomial multiples of our
             generating set. If such syzygies do not span the power counting
             space, we construct appropriate linear combinations via linear
             algebra.
             If this results in an incomplete set of surface terms, we return to
             step 2. We increase the degree bounds $\vec{N}$ and construct a
             new, larger generating set of $\text{CSyz}(\Gamma)$. We repeat this
             iteratively until an appropriate $\vec{N}$ has been found such that
             a complete set of surface terms is produced.
             The resulting basis of power-counting compatible surface terms is
             then either implicitly stated as an analytic linear system,
             \cref{eq:PowerCountingConstraints}, or as monomial multiples of
             $\text{CSyz}(\Gamma)$ generators.
\end{enumerate}
The output of this procedure is two fold.
First, there is an implicit representation of a generating set of
$\text{CSyz}(\Gamma)$ as a compact, analytic linear system in
\cref{eq:criticalSyzygyEqsysRemoveLD}.
Second, the basis of surface terms is either presented implicitly as a further
compact, analytic linear system, \cref{eq:PowerCountingConstraints} or as
explicit products of monomials in Baikov variables and $\text{CSyz}(\Gamma)$
generators.
The linear systems can then be solved to produce the necessary surface terms.
In practice, it is most fruitful to solve these systems numerically, as we do in
\cref{sec:tthApplication}.

Let us make a few comments on our approach. Firstly, we do not guarantee that it
produces a basis that spans $\text{CSyzSurface}^R(\Gamma, \vec{\nu})$. That is,
we do not guarantee that the isomorphism of \cref{eq:isomorphismWithPC} holds.
Nevertheless, we find in practical applications that it does.
Secondly, in step 3, we see that we cannot use an arbitrary generating set of
$\text{CSyz}(\Gamma)$ to construct a set of critical surface terms within power
counting by taking linear combinations. For this reason, we allow ourselves to
return to step 2 and increase the degree bounds.

\subsection{Application to Planar Contributions to $pp \rightarrow t\bar{t}H$ at Two Loops}\label{sec:tthApplication}
\begin{figure}[t]
    \centering
    \begin{minipage}{0.21\textwidth}
        \centering
        \includegraphics[width=\textwidth]{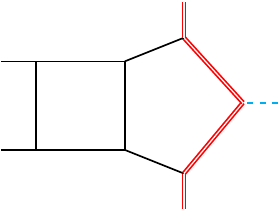}
    \end{minipage}
    \quad\,
    \begin{minipage}{0.21\textwidth}
        \centering
        \includegraphics[width=\textwidth]{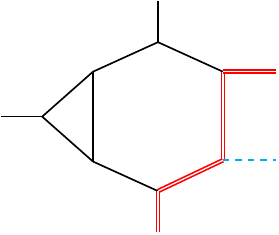}
    \end{minipage} 
    \quad\,
    \begin{minipage}{0.21\textwidth}
    \centering
    \includegraphics[width=0.9\textwidth]{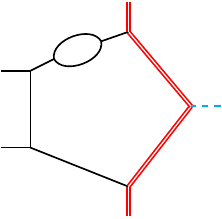}
    \end{minipage}
    \,\,
    \begin{minipage}{0.21\textwidth}
    \centering
    \includegraphics[width=0.95\textwidth]{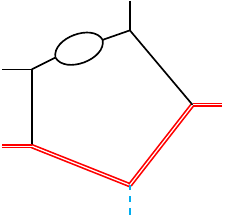}
    \end{minipage}
\\
\caption{ Four top sectors of leading color ttH production with light quark
  loop. Red lines denote the top quark, the blue dashed line denotes the Higgs boson. Black lines are massless
  particles (gluons or light quarks).
}
\label{fig:ggttHtopsector}
\end{figure}

In order to show that critical syzygies are a useful tool in practical
application to scattering processes, we apply our approach to the two-loop
leading-color contributions to the $pp \rightarrow
t\overline{t} H$ process.
Importantly, this allows us to study, in a physically relevant
case, the strength of the assumption that the saturation index, $\mu$, is 1 and
the critical locus of the logarithm of the Baikov polynomial,
$U^{\Gamma}_{\text{crit}(\log[B])}$, is a finite set of points.

We begin by applying the approach in \cref{sec:algorithm} to generate a
power-counting-compatible basis of surface terms for all diagrams in
leading-color light-quark-loop contributions in $t\overline{t}H$ production.
With a view to implementing these surface terms in the \texttt{Caravel}
framework, for each graph $\Gamma$ we work with the Baikov polynomial associated
to this graph (and not the top topology) as the remaining class of surface terms
are known analytically~\cite{Abreu:2017xsl}.
Diagrams depicting the four top-sectors are shown in \cref{fig:ggttHtopsector}.
We find that there are 123 non-factorizable subsectors which are inequivalent
under relabelling of external legs. As our representation of surface terms is
analytic, it is sufficient to analyze only these inequivalent sectors.
We compute the saturation index $\mu$ and the dimension of the critical locus of
the cut Baikov polynomial, using the computer algebra system \texttt{Singular}.
We perform the computation of both quantities on a numerical, finite-field
phase-space point, which leads to a negligible computation time.

Of the 123 inequivalent topologies, 122 give rise to a saturation index of
$\mu=1$ and a critical locus of $\log(B)$ on the maximal cut which is a finite
set of points. We consider these first.
Constructing the surface terms with our approach for each sector requires
solving linear equation system in \cref{eq:criticalSyzygyEqsysRemoveLD}. In
practice, we find that the largest linear equation system is of side-length 400.
In order to find a basis of surface terms within power counting, we must choose
a strategy for solving the power-counting constraints
\cref{eq:PowerCountingConstraints}. In most cases, we find it sufficient to use
the first strategy. Nevertheless, there are 9 sectors for which we apply the second
strategy.
In this case, after optimization, the most complicated linear system that we
must solve for the unknowns $y_m$ of \cref{eq:BasisAnsatz} contains 80 unknowns.
Having constructed the complete set of surface terms, we see that all
generating sets for the $\text{CSyz}(\Gamma)$ can be constructed with degree
bounds satisfying $N_j\leq 3$.

\begin{figure}[t]
    \centering
    \begin{minipage}{0.35\textwidth}
        \centering
        \includegraphics[width=\textwidth]{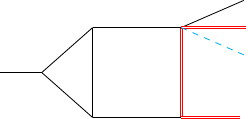}
    \end{minipage}
\caption{
  $t\overline{t}H$ topology for which the saturation index of
  $J_{\text{syz}}^\Gamma$ with respect to the Baikov polynomial is not 1.
  In this case, to recover the full set of surface terms, we allow for
  dimension shifted seed integrals.
  Black lines represent massless particles (i.e. gluons/light quarks), red lines
  represent the top quark and the blue dashed line represents the Higgs.
}
\label{fig:diagmu2}
\end{figure}

The single topology that has a saturation index of $\mu = 2$ is depicted in
\cref{fig:diagmu2}. We note that the critical locus of the logarithm of the
Baikov polynomial is still a finite set of points.
To compute the set of surface terms in this case, we take a two step approach.
First, we use the approach of \cref{sec:algorithm} to compute as many surface
terms as possible. In practice, we find that we miss only one surface term.
We then turn to the analysis of \cref{sec:NonTrivialSaturation} to construct the
remaining surface term. We first compute the saturation index
$\overline{\Delta}$ and find that it is $2$, telling us that this construction
does indeed produce a single new surface term in the large $\epsilon$, on-shell
limit.
To explicitly construct the surface term, we then solve
\cref{eq:DeltaProportionalDivergenceConstraint,eq:DeltaProportionalSyzConstraint}
by using a polynomial ansatz, successfully recovering the remaining surface term
and therefore constructing the full set.
We note that, due to the high degree of the Baikov polynomial, this is quite
computationally demanding. Nevertheless, the resulting surface term is quite simple.
It is interesting to also analyze this topology with the traditional Laporta approach.
We use LiteRed~\cite{Lee:2013mka} to generate IBP relations using
only single propagator powers as seeds and find that a single relation is also
missing.
However, if one allows for higher degree powers of the propagators in
seed integrals, then a complete reduction is observed.
This suggests that it would be interesting to study how to interpret seeds with
raised propagator powers in a critical-syzygy framework, a question we leave to
future work.

In order to check the validity of the surface terms that we produce, we
implement the analytic systems of linear equations that define the surface terms
into \texttt{Caravel}~\cite{Abreu:2020xvt}, which is able to use such systems to
perform numerical reductions of tensor integrals.
We then use \texttt{Caravel} to perform numerical reductions of tensor integrals
within power counting and check the validity of our implementation with
\texttt{FIRE 6.5}~\cite{Smirnov:2023yhb} at a series of randomly chosen
numerical phase-space points.

\begin{figure}[t]
    \centering
    $\eqnDiag{\includegraphics[scale=0.7]{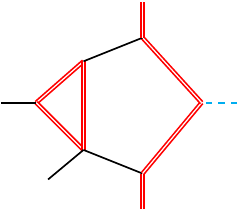}}$
    \,\,
    $\eqnDiag{\includegraphics[scale=0.7]{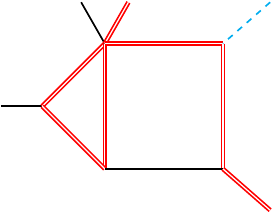}}$
    \,\,
    $\eqnDiag{\includegraphics[scale=0.7]{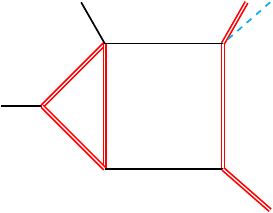}}$
    \,\,
    $\eqnDiag{\includegraphics[scale=0.7]{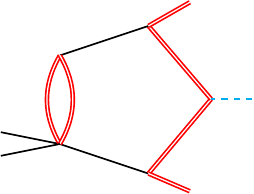}}$
    \\[1em]
    $\eqnDiag{\includegraphics[scale=0.8]{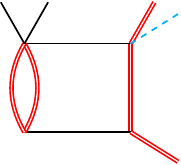}}$
    \qquad
    $\eqnDiag{\includegraphics[scale=0.8]{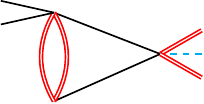}}$
    \qquad
    $\eqnDiag{\includegraphics[scale=0.8]{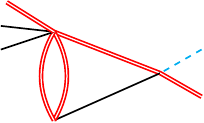}}$
    \qquad
    $\eqnDiag{\includegraphics[scale=0.8]{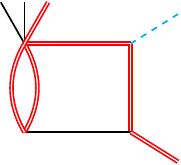}}$
  \caption{Collection of non-factorizable topologies from planar massive quark loop contributions
  to $pp \rightarrow t\overline{t}H$ for which the critical locus of $\log(B)$
  is not isolated.
  Red lines represent massive top quarks, black lines represent massless
  particles (gluons) and the blue line represents the Higgs.
  All integral topologies that contribute to $pp \rightarrow t\overline{t}H$
  which have a non-isolated critical locus are either permutations of those
  displayed, or are factorizable.}
  \label{fig:higherDimensional}

\end{figure}

Having considered the light-fermion-loop contributions to $pp \rightarrow
t\overline{t}H$, we next make an analysis of the pentabox topology that contributes
to the closed-top-loop contributions. Here, we find a number of topologies for
which the critical locus of the logarithm of the Baikov polynomial is not
a finite set of points. Many of these topologies are factorizable, containing the
one-loop integral of \cref{fig:OneLoopNonIsolated} as a factor, so this
observation is natural. Nevertheless, many of these integrals are not
factorizable, and we depict them in \cref{fig:higherDimensional}. For these
topologies, critical syzygies as studied in this paper are insufficient to
perform a complete reduction to master integrals, and we leave such a study to
further work. Interestingly, for each topology depicted in
\cref{fig:higherDimensional}, when considered in the Baikov representation
associated to corresponding graph, the critical locus on the maximal cut is only
one-dimensional. This unexpected simplicity hints that extension of the critical
syzygy formalism to cover these cases may be within reach.

\section{Summary and Outlook}
\label{sec:summary}

In this work, we have uncovered a new mathematical structure hidden within the
linear relations exhibited by Feynman integrals.
Working in the syzygy approach for constructing relations between
Feynman integrals and motivated by recent advances in intersection theory, we
considered how the numerators of integral relations, or ``surface terms'' behave
in the limit where the dimensional regulator, $\epsilon$, is taken to be large.
We showed how surface terms in the large-$\epsilon$ on-shell limit, must vanish
on critical points of the logarithm of the Baikov polynomial.
Moreover, we showed how this statement can be interpreted in the algebro-geometric
language of ideals, and how the ideals that arise are connected to the
Lee-Pomeransky approach for counting the number of master integrals.
This connection then motivated us to define a special class of
syzygies, which we dubbed ``critical syzygies''. Strikingly, while critical
syzygies are effectively a subset of the full syzygy module, for cases where the critical
locus of the Baikov polynomial is a finite set of points, we argued that they can be
used to construct the full set of surface terms in the large-$\epsilon$ limit.

In order to understand the practical construction of critical syzygies for
loop amplitude calculations, we made a number of studies at the one- and
two-loop level.
We first discussed how critical syzygies arise at one loop, providing an
alternative construction of OPP-like integrand bases.
We then moved to consider two-loop approaches, where we presented a
computational approach to construct critical syzygies.
We used this approach to study the two-loop example of planar Feynman integrals
for contributions to $pp \rightarrow t\overline{t}H$ production, directly
showing the applicability of critical syzygies to the light-fermion-loop case.
Interestingly, this allowed us to identify a case where careful study of the
multiplicity structure of the syzygies becomes important to obtain a complete
reduction to master integrals.

There are a number of further important avenues for work within the critical
syzygy approach. Firstly, critical syzygies are insufficient to perform IBP
reduction in cases where the critical locus of the cut Baikov polynomial is not
a finite set of points. A natural extension of our work would be to understand
if the critical syzygy framework could be extended to cover such cases.
Moreover, our study in \cref{sec:tthApplication} of the integral topology
relevant to $t\overline{t}H$ production where the multiplicity structure plays
an important role suggests that it would be useful to understand higher
propagator seeding in a critical syzygy framework.
Finally, the geometrical connection between critical syzygies and
critical/singular points of the Baikov polynomial motivates further work into
constructing analytical critical syzygy solutions.

\section*{Acknowledgments}
We thank Giulio Gambuti, Harald Ita, Pavel Novichkov and Vasily Sotnikov for insightful
discussions. We thank Harald Ita and Pavel Novichkov for comments on the draft.
The work of Qian Song was supported by the European Research Council (ERC) under
the European Union’s Horizon Europe research and innovation program grant
agreement 101078449 (ERC Starting Grant MultiScaleAmp). Views and opinions
expressed are however those of the authors only and do not necessarily reflect
those of the European Union or the European Research Council Executive Agency.
Neither the European Union nor the granting authority can be held responsible
for them.
\pagebreak

\appendix  
\section{Extended splitting lemma}
\label{app:splittingLemma}
Here we prove the lemma necessary to decompose $J_{\text{syz}}^\Gamma$ in \cref{sec:GeometryToAlgebra}.
This lemma can be regarded as an extended version of the splitting lemma
described in \cite[section B.3]{DeLaurentis:2022otd}.

\begin{lemma} Let \(R\) be a polynomial ring, \(J, K\) be ideals of \(R\) and \(b\) be an element of \(R\) 
such that \(Kb \subset J\). In this case one has that
\begin{equation}
    J = (J+K) \cap (J+b^\mu),
\label{eq:ExtendedSplittingLemma}
\end{equation}
where \(\mu\) is the saturation index of \(b\) with respect to \(J\).
\end{lemma}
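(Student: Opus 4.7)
The plan is to prove the two-sided inclusion. The forward inclusion $J \subseteq (J+K) \cap (J+\langle b^\mu \rangle)$ is immediate since $J$ is contained in each ideal sum separately. The substance of the lemma is the reverse inclusion, which is where the hypotheses $Kb \subseteq J$ and the minimality of the saturation index $\mu$ must both play a role.

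For the reverse inclusion, I would pick an arbitrary element $x \in (J+K) \cap (J+\langle b^\mu \rangle)$ and write it in two ways: $x = j_1 + k$ with $j_1 \in J$ and $k \in K$, and $x = j_2 + r b^\mu$ with $j_2 \in J$ and $r \in R$. Equating the two representations yields $k - r b^\mu \in J$, i.e. $k \equiv r b^\mu \pmod{J}$. The natural next move is to multiply this congruence by $b$ in order to exploit the hypothesis $Kb \subseteq J$: this gives $r b^{\mu+1} \equiv kb \equiv 0 \pmod{J}$, so $r \in J : \langle b^{\mu+1} \rangle$.

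Now I would invoke the definition of the saturation index: by minimality of $\mu$, the ascending chain of ideal quotients $J : \langle b^m \rangle$ stabilizes at $m = \mu$, and in particular $J : \langle b^{\mu+1} \rangle = J : \langle b^\mu \rangle$. Hence $r b^\mu \in J$, and therefore $x = j_2 + r b^\mu \in J$, completing the reverse inclusion.

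The only delicate point, and the step I would be most careful about, is the invocation of the stabilization property of the saturation index: one must be precise that $\mu$ denotes the smallest integer for which $J : \langle b^\mu \rangle$ equals the stable quotient $J : \langle b^\infty \rangle$, so that the passage from $r \in J : \langle b^{\mu+1} \rangle$ back to $r \in J : \langle b^\mu \rangle$ is valid. Everything else is formal manipulation of ideal membership, and the hypothesis $Kb \subseteq J$ is used exactly once, to convert the factor of $k$ into something lying in $J$ after a single multiplication by $b$.
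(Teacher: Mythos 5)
Your proposal is correct and follows essentially the same approach as the paper's own proof: write an element of the intersection in two ways, multiply by $b$ to invoke the hypothesis $Kb \subseteq J$, conclude $r \in J : \langle b^{\mu+1} \rangle$, and use the stabilization $J : \langle b^{\mu+1} \rangle = J : \langle b^\mu \rangle$ to pull back to $rb^\mu \in J$. The only cosmetic difference is that you phrase the key step as a congruence modulo $J$ while the paper manipulates an explicit identity, but the underlying argument is identical.
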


\begin{proof}
It is clear that \(J \subseteq (J+K) \cap (J+b^\mu)\) as the intersection is of
two sets which each contain \(J\). Hence, if we have the reverse (non-proper)
inclusion, then we have equality. Let us consider an element of the
intersection. We shall prove that it is a member of \(J\), which will therefore prove
\eqref{eq:ExtendedSplittingLemma}.

Let us name the element in question \(c\). By definition we can write that
\begin{equation}
    c = j_1 + k = j_2 + t b^\mu,
\end{equation}
where \(j_i \in J\), \(k \in K\) and \(t \in R\). We aim to prove that \(t b^\mu \in J\). 
To do this, we shall consider
\begin{equation}
    t b^{\mu+1} = j_1 b + k b - j_2 b.
\end{equation}
Manifestly the right hand side is a sum of three elements of \(J\), and as \(J\) is
an ideal this implies that \(t b^{\mu+1}\) is also an element of \(J\). By the
definition of an ideal quotient, we have that this means that \(t \in J : b^{\mu+1}\).
However, as \(\mu\) is chosen to be the saturation index of \(b\), we have that
\(J:b^{\mu+1} = J:b^\mu\), which implies that \(t b^\mu \in J\). Looking back to the
definition of \(c\), it is clear that \(c \in J\) and therefore the equality in
\eqref{eq:ExtendedSplittingLemma} is proven.
\end{proof}

\bibliography{criticalsyz1}
\end{document}